\newtheorem{theorem}{Theorem} 
\newtheorem{corollary}{Corollary} 
\begin{document}

\title{The Effect of Signaling Latencies and Node \\Refractory States on the Dynamics of Networks}
\author{Gabriel A. Silva (gsilva@ucsd.edu)\\Departments of Bioengineering and Neurosciences\\Center for Engineered Natural Intelligence\\University of California San Diego \\\\ {\normalsize [Note: Accepted version in press in Neural Computation]}}
\maketitle


\begin{abstract} 
 We describe the construction and theoretical analysis of a framework derived from canonical neurophysiological principles that model the competing dynamics of incident signals into nodes along directed edges in a network. The framework describes the dynamics between the offset in the latencies of propagating signals, which reflect the geometry of the edges and conduction velocities, and the internal refractory dynamics and processing times of the downstream node receiving the signals. This framework naturally extends to the construction of a perceptron model that takes into account such dynamic geometric considerations. We first describe the model in detail, culminating with the model of a geometric dynamic perceptron. We then derive upper and lower bounds for a notion of optimal efficient signaling between vertex pairs based on the structure of the framework. Efficient signaling in the context of the framework we develop here means that there needs to be a temporal match between the arrival time of the signals relative to how quickly nodes can internally process signals. These bounds reflect numerical constraints on the compensation of the timing of signaling events of upstream nodes attempting to activate downstream nodes they connect into that preserve this notion of efficiency. When a mismatch between signal arrival times and the internal states of activated nodes occurs, it can cause a break down in the signaling dynamics of the network. In contrast to essentially all the current state of the art in machine learning, this work provides a theoretical foundation for machine learning and intelligence architectures based on the timing of node activations and their abilities to respond, rather than necessary changes in synaptic weights. At the same time, the theoretical ideas we developed are guiding the discovery of experimentally testable new structure-function principles in the biological brain.  
\end{abstract}

\section{Introduction}
We describe the construction and theoretical analysis of a framework that models how local interactions among connected node pairs can be used to compute the global dynamics of a class of spatial-temporal network. By analyzing the local dynamics of incident signals into nodes, and how signals compete to activate downstream nodes, we were able to prove a number of properties associated with the dynamics of the network. This framework was derived from a theoretical abstraction of the canonical neurophysiological principles of spatial and temporal summation in biological neurons \cite{Magee2000,Kandel2012}. We considered spatial-temporal networks in the sense that we analyzed signals along (geometric) directed edges in a metric space that give rise to temporal latencies or delays of signals. The metric space is not intended to conform to any specific coordinate system. The only requirement is that it be able to support a spatial embedding that preserves geometric distances between edges, because it is the geometry of the edges that gives rise to the signaling latencies given a conduction velocity (signaling speed).  When combined with a refractory state for each node, a central concept in this model, the interaction between the two has a dominating effect on the global network dynamics. Following a successful activation, every node experiences a subsequent period, or state, of refractoriness, during which it is incapable of responding to subsequent inputs. The framework we developed reflects local processes from which global behaviors emerge. 

At its core, our model takes into account how the timing of different signals influences their competition to ‘activate’ target nodes they connect into, given that the internal (refractory) state of the node they are competing for may or may not allow that node to be activated. The refractory period is a value that reflects the internal state of the node in response to its activation by an upstream winning node (or set of nodes). In contrast to most other work, we do not assume anything about the internal model that produces this refractory state. For example, this could include an internal processing time during which the node is making a decision about how to react, or an internal re-set period after the node has reacted before it is in a state capable of responding to new inputs. A refractory period is a reasonable assumption for any  physically constructible network, since infinitely fast processing times are not realizable. If the time scale of such a refractory state is much shorter than the time it takes for propagating signals to reach their vertices, due either to fast signaling speeds and/or short edges, it can have a significant effect on network dynamics and function. The inverse will also effect the dynamics, where the refractory period is much longer than the time it takes for signals to arrive at target nodes. These concepts are foundational to the analysis we present in this work. 

In order to intuitively capture the relationship between signal flows on edges and the refractory states of the participating nodes, we defined a refraction ratio for each node. For every pair of connected nodes in the network, we define this as the ratio between the amount of time left in the recovery from the refractory period for the node receiving a signal, relative to the amount of time before the next signaling event arrives at that node from the upstream node that connects into it. The refraction ratio captures the relationship between the speed and temporal delays of signaling or information flow, which is bounded by the spatial geometry of the edges, and the refractory state of the node. We use the term activation here to imply an appropriate reaction of that node to the signal it has received. In the case of biological neurons for example, it would be the generation of an action potential that propagates down the axon and axonal arborizations to synaptic terminals. In neurons, the membrane refractory period due to the biophysics of the membrane ensures the directional nature of the traveling action potential wave along the axon, and sets the cell's maximal firing frequency.  Analogously in our framework, it is the generation of discrete signaling events that directionally travel between edges that connect the nodes in a network. As we and others have previously shown, the interplay between temporal latencies of propagating discrete signaling events, relative to the internal dynamics of the individual nodes, can have profound effects on the dynamics of a network \cite{buibas,Manor:1991tu,Holme:2013p,George:2013b}. In a spatial-temporal network, as we define it in this work, temporal latencies need not be directly assigned but result from the relationship between signaling speeds (conduction velocities) and the geometry of the edges on the network (i.e. edge path lengths). 

An analysis of the framework led to two main results. First, a series of systematic extensions of the basic construction resulted in a spatial-temporal version of a perceptron. In the classical perceptron, contributing weight summations from upstream nodes into an activation function determine whether that node produces an output signal \cite{ros,Minsky:1967ml}. In our case, we considered how temporal latencies produce offsets in the timing of the summations of incoming discrete signaling events. Signaling latencies affect when and how the threshold of the perceptron's activation function is reached, as a function of a running summation in time. This produces a much richer dynamical repertoire relative to the classical perceptron model. As part of this model, we defined a decaying memory function such that the maximum value of the weight for a given connection occurs at the time of arrival of the signal. Subsequent time steps result in progressive decreasing (i.e. decaying) contributions from the maximum weight value to the summation function.

Secondly, a theoretical result from the analysis of the refraction ratio was a set of bounds  that allowed us to formally define a notion of efficient signaling within the context of the framework. We show that an optimal ratio is one where the timing of information propagation between connected nodes does not exceed the internal dynamic time scale of the nodes. In other words, it represents a balance between how fast signals propagate through the network relative to the time needed for each node to process incoming signals. Efficient signaling in the context of the framework we develop here means that there needs to be a temporal match between the arrival time of the signals relative to how quickly nodes can internally process signals. When a mismatch between these two considerations occurs, it can cause a break down in the signaling dynamics of the network. We have previously shown in numerical experiments that a qualitative imbalance of this phenomenon can result in the breakdown of a network to sustain recurrent activity \cite{buibas}. Intuitively, efficient signaling as we develop it in this work implies that there is necessarily a temporal match between the amount of time it takes signals to travel between connected node pairs in a network relative to how quickly the nodes receiving signals can internally process them. A mismatch in time between these two considerations results in a break down in the signaling dynamics of the network. It is this notion of efficient signaling that we formalize here.

The focus of our exposition here is the development and theoretical analysis of the framework, and not a numerical or computational investigation. We also intentionally do not show any applications of the work, which are beyond the intent of this paper. On-going work is using the framework to develop a new machine learning architecture. This architecture is fundamentally different from existing artificial neural network models in how it learns and encodes information and data. We are also using the refraction ratio in the neurophysiological analyses of structure-function dynamics in biological neurons and networks. Using high resolution morphological reconstructions of axon shapes, we recently showed that the refraction ratio reflects a design principle that biological neurons optimize. It reflects a balance between the wiring lengths of axons (material cellular costs) versus signaling speeds (temporal costs) \cite{Puppo2018}.

The paper is organized as follows: In section \ref{sec:model} we introduce the basic construction of the competitive refractory framework and provide an  analysis of its dynamics. Following a set of preliminary definitions (sections \ref{sec:prelims} and \ref{sec:id}), we introduce the refraction ratio in section \ref{sec:refracratio}. We then show how we can use this ratio to compute the set and order of winning nodes in parallel across a network, essentially providing an algorithm to compute global behaviors from local dynamics at the scale of individual node pair interactions (section \ref{sec:analysis}). Section \ref{sec:ext} outlines a number of systematic extensions of the basic construction. In section \ref{sec:prob} we discuss a simple probabilistic extension of the deterministic version of the framework. Section \ref{sec:inh} introduces inhibitory inputs into a node (in contrast to excitatory inputs). In section \ref{sec:intproc} we discuss the role and contribution of internal node processing times to subsequent signaling latencies on edges. Section \ref{sec:per} concludes with the development of fractional node contributions summating in time in order to reach an activation threshold that activates a downstream node. This is our model of a geometric dynamic perceptron. Section \ref{sec:vis} introduces two distinct versions of graphics derived from Feynman diagrams intended to provide a visualization tool for the dynamics. In section \ref{sec:optflow} we introduce a notion of optimal efficient signaling within the context of the framework. We define a set of bounds the refraction ratio must conform to. Section \ref{sec:con} provides some concluding comments. 

\section{Competitive refractory framework}
We considered the geometrical construction of a spatial-temporal network in the following sense. We assume that signals or discrete information events propagate between nodes along directed edges at a finite speed or conduction velocity, resulting in a temporal delay or latency of the signal arriving at the downstream node. Imposing the existence of signaling latencies implies a network that can be mapped to a geometric construction, where individual nodes could be assigned a spatial position in space in $\mathbb{R}^3$ for an ordered triplet $\bar{x}_i = (x_1,x_2,x_3)$ for each vertex $v_i$ for all vertices $i = 1 \ldots N$. Where $N$ is the number of nodes and therefore the size of the network. We will use the terms vertices and nodes interchangeably throughout the paper. Formally, vertices belong to a graph that models a particular network, with the nodes belonging to the network.  Directed edges connecting node pairs could have a convoluted path, i.e. a Jordan arc. There is no restriction that edges have to be spatially minimizing straight line edges (Fig. 1). A signaling latency $\tau_{ij}$ expresses the ratio between the distance traveled on the edge, $d_{ij} = |e_{ij}|$, relative to the speed of the propagating signal $s_{ij}$, between a vertex $v_i$ that connects into a vertex $v_j$. For all pairs of connected vertices $v_iv_j$ $\tau_{ij} = d_{ij} /s_{ij}$. While one does not have to explicitly consider $d_{ij}$ and $s_{ij}$, the existence of signaling latencies can always be mapped to these variables. This is analogous to the conduction velocity of action potentials traveling down the convoluted axon and axonal arborizations of a biological neuron. (\emph{c.f.} Fig. \ref{fig:neteg}B).\label{sec:model}

\begin{figure}
\begin{center}
\includegraphics[width=6.5in]{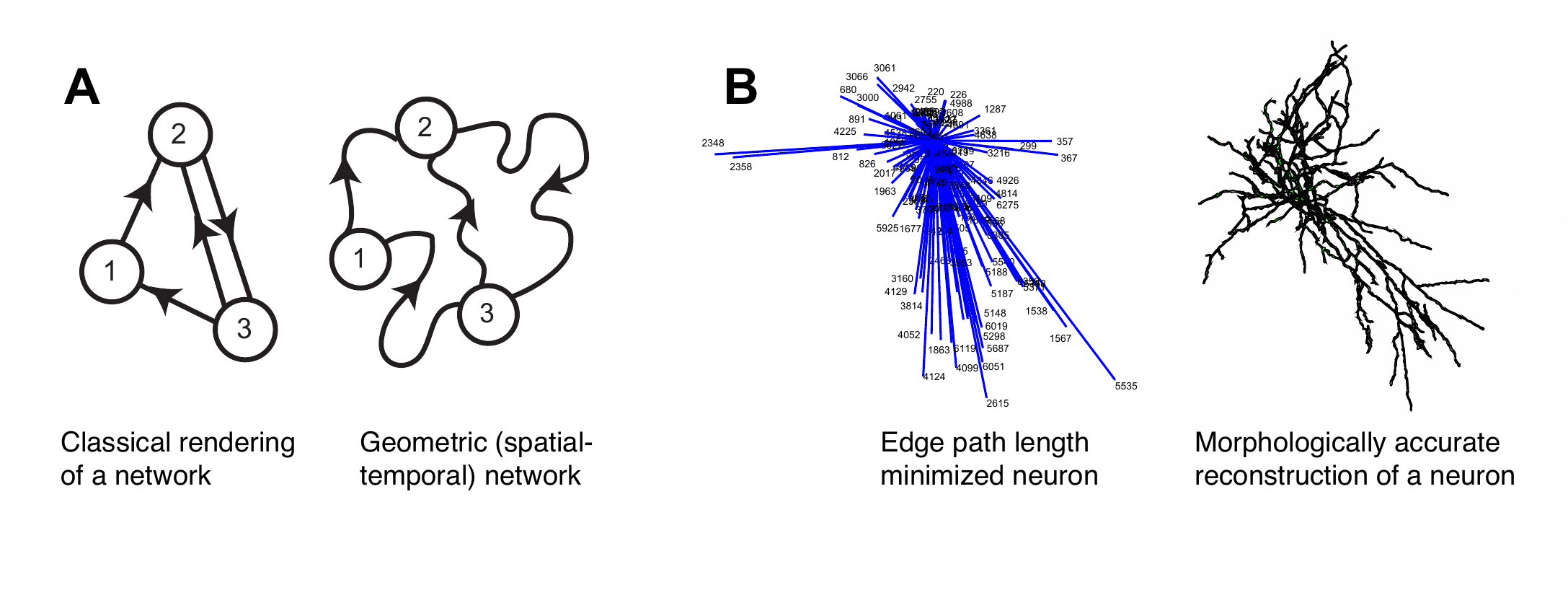} 
\caption{\textbf{A.} Spatial-temporal geometric graph models of a network as we define them here are graphs who's vertices can have a physical position in space (or the plane), with directed edges that have a physical (geometric) representation in space (or the plane).  Given a finite speed for discreete signals propagating down an edge, this produces signaling latencies or delays that when combined with a node (vertex) refractory state results in the dynamics of the framework. In constrast, in a classical representation of a graph the only important consideration is the connectivity of the vertices (i.e. adjacency matrix). There is no geometric signficance to how the graph is drawn. See text. \textbf{B.} An example of a real world spatial-temporal geometric network. Biological neurons can be modeled as tree graphs with the initial node at the cell body (soma) where the axon begins. The axon and its arborizations have convoluted morphologies (edge paths) and display discrete signals that propagate at finite speeds (action potentials). They are physically not edge path length minimized. (Panel (b) adapted from \cite{Puppo2018}.) }
\label{fig:neteg}
\end{center}
\end{figure}

\subsection{Preliminaries}
The set of all edges in the graph $G=(V,E)$ is given by $E=\{e_{ij}\}$, where $e_{ij}$ denotes the directed edge from vertex $v_i$ to vertex $v_j$. We define the subgraph $H_j$ as the (inverted) tree graph that consists of all vertices $v_i$ with directed edges into $v_j$. We write $H_j(v_{i})$ to represent the set of all vertices $v_i$ in $H_j$ and $H_j[v_{i}]$ to refer to a specific $v_{i} \in H_j(v_{i})$. We assume there exist discrete signaling events traveling at a finite speed $s_{ij}$ on the edge $e_{ij}$. The signaling speed $s_{ij}$ from $v_i$ to $v_j$ is bounded such that $0 < s_{ij} < \infty$, i.e. it must be finite. We introduce the notation $H_j[v_{i}] \leadsto v_j$ to mean a vertex $v_{i} \in H_j(v_i)$ that causally leads to the activation of $v_j$. In other words, this notation indicates the `winning' vertex who's signal manages to activate $v_j$ (Fig. \ref{fig:neteg2}). \label{sec:prelims}

We then define an absolute refractory period for the vertex $v_j$ by $R_j$. This reflects the internal dynamics of $v_j$ once a signaling event activates it. For example, the amount of time the internal dynamics of $v_j$ requires to make a decision about an output in response to being activated, or some reset period during which it cannot respond to subsequent arriving input signals. We place no restrictions on the internal processes that contribute toward $R_j$. We assume that we do not know and cannot observe the internal model of $v_j$, which could be quite complex, e.g. a family of temporal differential equations that produce its refractory period.  But we do assume we can observe or measure how long $R_j$ is. We also assume that $R_j > 0$, i.e. there cannot exist an infinitely fast or instantaneous recovery time, even though it can be arbitrarily short. This is a reasonable assumption for any physically constructible network. 

\begin{figure}
\begin{center}
\includegraphics[width=6in]{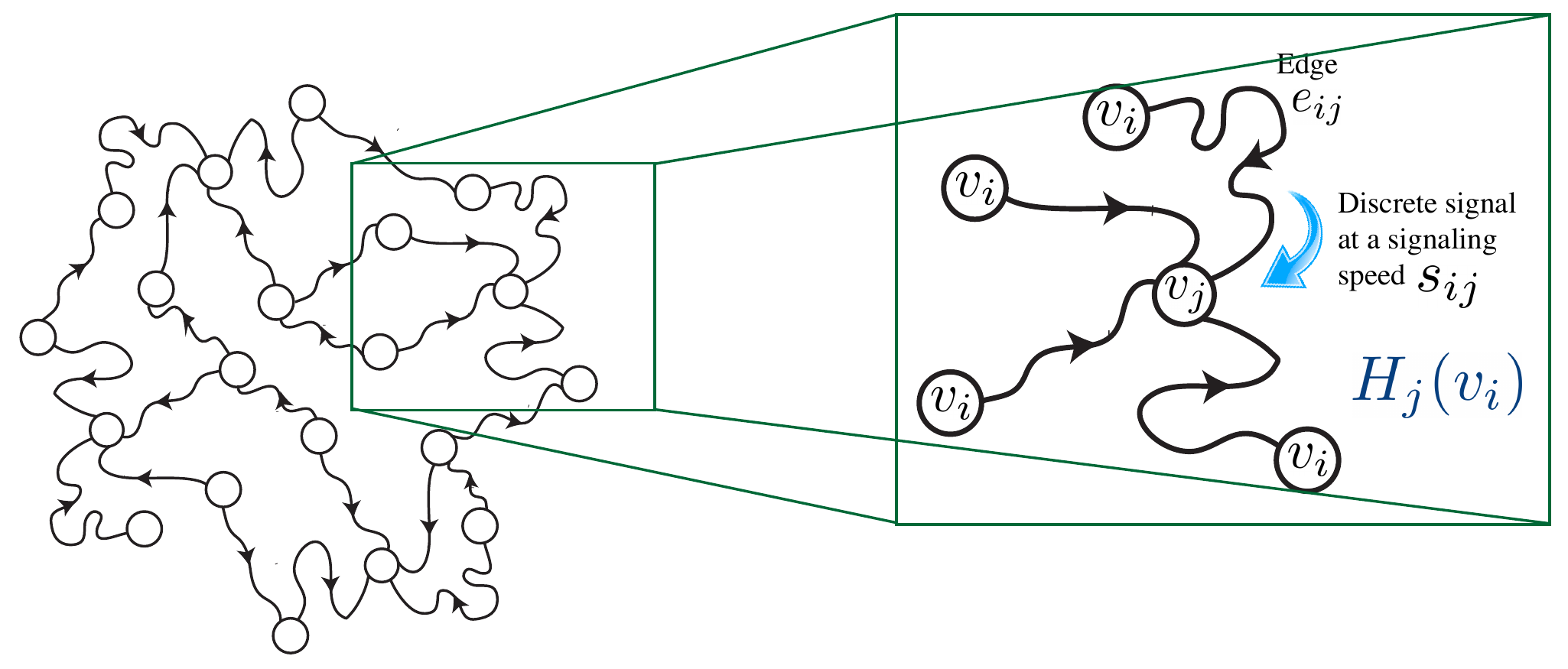} 
\caption{The subgraph $H_j(v_i)$ in relation to an illustrative arbritrary network. Each vertex $v \in V$ has its own $H_j$ subgraph. Note the relative usage of the indices $i$ and $j$. Vertex $v_j$ becomes one of several $v_i$ vertices that in turn connects into other $H_j$ subgraphs. See text for details.}
\label{fig:neteg2}
\end{center}
\end{figure}

Consider a vertex $v_i$ with a directed edge $e_{ij}$ to a vertex $v_j$. For $v_i$ to signal $v_j$, there must be some discrete physical signal representing a flow of information from $v_i$ to $v_j$ over the edge that connects them. This signal must travel at some finite speed $s_{ij}$. $s_{ij}$ could be a constant value for all edges, but this need not be true in the general case. Similarly, if all nodes $v_j$ in a network share the same internal dynamics, then $R_j = R \forall v \in V$. But the framework does not assume this and can accommodate differing node specific values of the refractory period. Once $v_j$ receives a signal from $v_i$ it becomes refractory for a period $R_j$ and will not be able to respond to another incoming signal during this period of time. The temporal nature of $R_j$ implies that as time progresses it shortens and eventually decays to zero, at which time $v_j$ is able to respond to another input.

\subsection{Internal dynamics of $v_j$}  
Let $y_j(\Omega_j,t)$ represent the instantaneous state of vertex $j$ as a function of time and some (possibly unobservable) model or activation function with variable and parameter set $\Omega_j$. The internal state can be interpreted as a binary function determined by $y_j(\Omega_j,t)$. We can define this function at some observation time $T_o$ as \label{sec:id}

\begin{equation} \label{eq:yj}
 y_j(\Omega_j,T_o) =\begin{cases}
    1, & \text{iff $v_j$ can respond to an input}\\
    0, & \text{iff it is refractory to any input}
  \end{cases}
\end{equation} 

Once the winning vertex `activates' $v_j$ it becomes refractory for a period of time $R_j$ during which $y_{j} =  0$, determined by its internal dynamic model. An important point is that if the state of $v_j$ at some arbitrary observation time $T_o$ is $y_j = 0$, it could remain refractory for some time $< R_j$ if it had become refractory prior to $T_o$. This situation is interesting because we have to take into account phase shifts in $\tau_{ij}$ and $R_j$ at $T_o$ in order to understand the patterns of node activations. In other words, which arriving signal results in the activation of $v_j$ is dependent on the amount of the refractory period of $v_j$ remaining relative to the time of arrival of the signal. We call the amount of 'residual' refractory period at $T_o$ the effective refractory period, which we will write as $\bar{R}_j$. This is at the core of our model.  

Algorithmically, we can compute at discrete times in parallel for every vertex in the network, i.e. every $v_j$, which $v_i \in H(i)$ causally activates $v_j$, i.e. $H_j[v_{i}] \leadsto v_j$. We will achieve this by keeping track of the temporal interplay of propagating signaling events on the edges relative to the refractory states of the individual vertices by computing the refraction ratio for all node pairs.

\subsection{The refraction ratio} 
We first establish a simple and intuitive relationship between $R_j$ and $\tau_{ij}$ for a vertex $v_j$. We define the refraction ratio between the refractory period $R_j$ and a signaling latency $\tau_{ij}$ associated with a discrete signaling event coming from a vertex $v_i$ on the edge $e_{ij}$ as 
\begin{equation}\label{eq:R}
\Delta_{ij} = \frac{R_j}{\tau_{ij}} = \frac{R_j \cdot s_{ij}}{d_{ij}}
\end{equation}
Our determination of $H_j[v_{i}] \leadsto v_j$ will emerge from an analysis of this ratio for every vertex connected into $v_j$ in $H_j(v_i)$. \label{sec:refracratio}

Before proceeding further, we point out a number of trivial and unallowable conditions that are necessitated by the physical construction of real world networks and the definitions above. The trivial lower bound occurs as $R_j \rightarrow 0$, $y_j = 1$ at all times. But recall that $R_j > 0$. $R_j =  0$ implies a non-refractory vertex capable of instantaneous recovery to an incoming signal from an upstream vertex, a condition which is not physically realizable. The trivial upper bound occurs as $R_j \rightarrow \infty$, $y_j = 0$ at all times, in which case there would be no information flow or signaling ever. As $\tau_{ij} \rightarrow 0$ $\Delta_{ij}$ becomes undefined, which is equivalent to stating $d_{ij} \rightarrow 0$ since $\tau_{ij} \propto d_{ij}$ for a fixed signaling speed $s_{ij}$. Equivalently, $\tau_{ij} \rightarrow 0$ if $s_{ij} = \infty$. But these conditions are unattainable. $\Delta_{ij}$ is necessarily restricted to finite dynamic signaling and information flow in a network.

 Intuitively, for any $v_i$ into $v_j$ when $y_j = 1$, if every $v_i \in H_j(v_{i})$ initiates a signal at exactly the same moment and $v_j$ is not refractory, the vertex with the shortest edge path will win and activate $v_j$. In other words, assuming a constant signaling speed $s_{ij}$ for all $v_i \in H_j(v_i)$ if we let $D_{ij} :=  \{d_{ij} = |e_{ij}| : i = 1, 2, \dots N\}$ be the set of all edge paths for $H_j(E)$, then $H_j[v_{i}] \leadsto v_j$ will be achieved by $v _{i}(\min_i d_{ij})$ for $d_{ij} \in D_{ij}$. 

Under realistic conditions however, at an arbitrary observation time $T_o$, there is likely to be a temporal offset between when each $v_i$ signals and how far along $v_j$ is in its recovery from its refractory period due to a previous signaling event. Furthermore, each $v_i \in H_j(v_i)$ is statistically independent from every other $v_i$, so that the amount of temporal offset for each $v_i$ vertex signaling $v_j$ will be different. In order to compute these offsets and keep track of the overall dynamics of the network, we need to index two different notions of time. We define $t_i$ to be the moment at which $v_i$ initiates a signal along its edge $e_{ij}$ towards $v_j$. The other is the observation time $T_o$ itself, which is the moment at which we observe or measure the state of $H_j(v_i)$. Or put another way, the moment in time at which we chose to interrogate how far along each signaling event is on its respective edge $e_{ij}$. Such an offset in the progression of discrete signaling events along different edges is the general case, except in the special case when every $v_i \in H_j(v_i)$ signals at the same moment. 

With these considerations in mind, we can take into account temporal offsets by slightly expanding how we define the refractory period and signaling latency. For the refractory period, let $\phi_j$ represent a temporal offset from $R_j$, such that at $T_o$ 
\begin{equation} \label{eq:rbar}
\bar{R}_{j} = R_j - \phi_j \text{ where } 0 \leq \phi_j \leq R_j
\end{equation} \label{eq:effectiveR}
We call $\bar{R}_j$ the effective refractory period. It reflects the amount of time remaining in the recovery from $R_j$ at the observation time $T_o$. When $\phi_j = 0$ it implies $v_j$ became refractory exactly at $T_o$. When $\phi_j = R_j$ it implies that $v_j$ is not refractory and can respond to an input from any $v_i$ at any time. Note how when $\phi_j = R_j$ $v_j$ may have been refractory at some time $t \leq T_o-R_j$, but assures the condition that $\bar{R}_j = 0$ at $T_o$. For values $0 < \phi_j < R_j$, $\bar{R}_j$ is partially recovered. In other words $v_j$ had become refractory before $T_o$ but has not yet fully recovered at $T_o$. Note that unlike $R_j$, which is an actual variable of the dynamical properties of the network, $\phi_j$ and $\bar{R}_j = 0$ are explicitly dependent on the relative observation time $T_o$. One could define them in absolute time as per the discussion in section \ref{sec:abtime} . However, as we argue below, there are some practical advantages and even necessary conditions to consider the state of the network relative to an observation time $T_o$ (\emph{c.f.} section \ref{sec:abtime})

In the most general case the times $t_i$ for all $v_i \in H_j$, i.e. the times at which each vertex initiates a signaling event relative to the observation time $T_o$, would not be expected to be all the same. At $T_o$ a signal from any $v_i$ may be traveling part way along $e_{ij}$ at a speed $s_{ij}$, effectively shortening $\tau_{ij}$. Or it may be delayed in signaling if $v_i$ signals some time after $T_o$, effectively lengthening $\tau_{ij}$. Each vertex $v_i \in H_j(v_i)$ is independent of every other vertex in the subgraph $H_j(v_i)$. And as such we expect different temporal offsets due to how far along a discrete signaling event is in its propagation along their respective edge $e_{ij}$. As such, we need to take into account the distance that a discrete signal initiated by each $v_i$ has traveled along its edge at the observation time $T_o$. To accomplish this, we extend how we consider a signaling latency in the following way. First, we retain $\tau_{ij}$ to represent the absolute temporal delay (latency period) for a signal that travels on the edge $e_{ij}$ for a vertex $v_i \in H_j$ when $v_i$ initiates a signaling event at $t_i$. Note that the initiation of the signaling event could come before, right at, or after the observation time. We then define a temporal offset for $\tau_{ij}$, an effective shortening or lengthening of $\tau_{ij}$ relative to $T_o$ as follows 
\begin{equation} \label{eq:taubar}
\bar{\tau}_{ij} = \tau_{ij} + \delta_{ij} \text{ where, }  \delta_{ij} \in \mathbb{R} 
\end{equation}
$\delta_{ij} > 0$ represents an effective delay or elongation beyond $\tau_{ij}$. In other words, it represents the vertex $v_i$ initiating a signal at some time after $T_o$. Values $-\tau_{ij} < \delta_{ij} < 0$ represent an effective shortening of $\tau_{ij}$. This would be the case when $v_i$ had initiated a signal that was traveling part way along the edge $e_{ij}$ towards $v_j$ prior to $T_o$. Using a consistent notation, we can write this remaining portion of the edge length as $|\bar{e}_{ij}|$. When $\delta_{ij} = 0$ it implies that $v_i$ signals exactly at the moment the network is observed. And when $\delta_{ij} = -\tau_{ij}$ it implies that the signal arrives at $v_j$ at the moment the network is observed. Values of $\delta_{ij} < \tau_{ij}$, which result in $\bar{\tau}_{ij} < 0$, represent a signal arriving at $v_j$ \emph{prior} to the observation time $T_o$.

We can now extend equation \ref{eq:R} to reflect the effective refractory period and effective latency (relative to an observation time $T_o$) as 
\begin{equation} \label{eq:Rovert}
\Lambda_{ij} = \frac{\bar{R}_j}{\bar{\tau}_{ij}}
\end{equation}

The extra degrees of freedom that result from the temporal offsets in the timing of arriving signaling events at $v_j$ for the set $v_i \in H_j(v_i)$ relative to each other,  produce a much larger combinatorial solution space for how $v_j$ may be activated, although we do not  systematically fully investigated the size of this space in this paper. The \emph{global} dynamics of the network results from the \emph{local} statistically independent dynamics of each vertex and its corresponding $H_j(v_i)$. Once $v_j$ is activated it then in turn contributes to the dynamics of the activation of the downstream vertices it connects into. The activation of $v_j$ results in its contribution as one of the $v_i \in H_j(v_i)$ for \emph{each} of the $H_j(v_i)$ it is a part of.   


\subsection{Analysis of the refraction ratio} 
Intuitively,  the `winning' vertex $v_i \in H_j(v_i)$ that successfully achieves the activation of $v_j$, $H_j[v_{i}] \leadsto v_j$, will be the first signaling event that arrives at $v_j$ immediately after $v_j$ has stopped being refractory. This is equivalent to stating that $H_j[v_{i}] \leadsto v_j$ will be achieved by the $v_i$ with the smallest value of $\bar{\tau}_{ij}$ larger than $\bar{R}_j$. This condition guarantees $H_j[v_{i}] \leadsto v_j$. This will be the case for the value of $\bar{\tau}_{ij}$ such that $\tau_{ij} \rightarrow R_j^+$, i.e. approaches $R_j$ from the right, that is, $\tau_{ij}$ is slightly longer than $R_j$. The analysis of $H_j(v_i)$ will therefore necessitate computing the order of arriving signaling events to determine which one meets this condition first. This is effectively an algorithm that computes $H_j[v_{i}] \leadsto v_j$ at $T_o$. \label{sec:analysis}

We begin by defining a well ordered set of refraction ratios for which the condition $\tau_{ij} \rightarrow R_j^+$ is met. First, define the set 
\begin{equation} \label{eq:BigLSet}
\Lambda^o_{ij} := \{\Lambda_{ij} : i = 1, 2, \dots N | \Lambda_{ij} \text{ for } \bar{\tau}_{ij} \rightarrow \bar{R}_j^+\}
\end{equation}
This implies that every $\Lambda_{ij} \in \Lambda^o_{ij}$ satisfies the condition $ \Lambda_{ij} < 1$. This is nothing more than a consequence of how we defined the ratio in equation \ref{eq:Rovert}, with the refractory period in the numerator and the latency in the denominator. Next, we impose an additional structure on $\Lambda^o_{ij}$ by ordering it with the standard $>$ operator. We order the set from the largest refraction ratio to the smallest. Note that although we do not make explicit use of the ordering of $\Lambda_{ij}$ yet, we will in section \ref{sec:per} below. Because $H_j$ is finite, it is a subgraph consisting of a finite number of $v_i$ vertices that connect into $v_j$, $\Lambda^o_{ij}$ will be a finite set. We can then use $\Lambda^o_{ij}$ to compute  $v_i$ $H_j[v_{i}] \leadsto v_j$ using the following theorems. Doing so for all $H_j(v_i)$ in the network in parallel at measurment times $T_o$ allows us to compute the global state and dynamics.

\begin{theorem} Assume $v_j$ has an effective refractory period $\bar{R}_j$ at an observation time $T_o$. If $\phi_j \neq R_j$, then the condition $H_j[v_i] \leadsto v_j$ is given by the refraction ratio $\Lambda_{ij} \in \Lambda^o_{ij}$ that satisfies \label{theo:theo1}
\begin{equation} 
\label{eq:rwithphi}
v_i = \lceil \max(\Lambda_{ij}) \rceil 
\end{equation} 
If, on the other hand, $\phi_j = R_j$ then the condition $H_j[v_i] \leadsto v_j$ is given by 
\begin{equation}
\min({\bar{\tau}_{ij}) \forall v_i \in H_j(v_i})
\end{equation}
\end{theorem}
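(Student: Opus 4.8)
The plan is to reduce the statement to the elementary monotonicity of the map $t\mapsto \bar{R}_j/t$ and then to treat the boundary value $\phi_j = R_j$, where that map degenerates, on its own. I would start from the informal characterization already recorded just before the theorem: $H_j[v_i]\leadsto v_j$ is realized by the discrete signaling event that reaches $v_j$ first among those that reach it after $v_j$ has finished being refractory. Relative to the observation time $T_o$, the signal from a given $v_i\in H_j(v_i)$ arrives at $v_j$ at effective delay $\bar{\tau}_{ij}$, while $v_j$ regains the state $y_j=1$ at effective delay $\bar{R}_j$; a signal arriving before this instant (while $y_j=0$) is discarded and, because $R_j$ is a fixed property of $v_j$, does not perturb $\bar{R}_j$. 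Hence a signal is \emph{admissible}---capable of activating $v_j$---precisely when $\bar{\tau}_{ij} > \bar{R}_j$, which by the definition of $\Lambda^o_{ij}$ in (\ref{eq:BigLSet}) and the subsequent remark that all its members satisfy $\Lambda_{ij}<1$ is exactly the index set that $\Lambda^o_{ij}$ carries. The winner is then the admissible $v_i$ with the smallest $\bar{\tau}_{ij}$; I would note in passing that the statistically independent timings make coincident admissible latencies a measure-zero event, and that a genuine tie would simply mean simultaneous co-activation.

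For the first case I would suppose $\phi_j\neq R_j$, so that $\bar{R}_j = R_j-\phi_j>0$ by (\ref{eq:rbar}) together with $R_j>0$. On $(\bar{R}_j,\infty)$ the function $g(t)=\bar{R}_j/t$ is strictly decreasing with range $(0,1)$, and $\Lambda_{ij}=g(\bar{\tau}_{ij})$ for each admissible $v_i$; therefore minimizing $\bar{\tau}_{ij}$ over the admissible set is equivalent to maximizing $\Lambda_{ij}$ over $\Lambda^o_{ij}$. This yields $H_j[v_i]\leadsto v_j$ as the vertex realizing $\max(\Lambda_{ij})$ in $\Lambda^o_{ij}$, i.e. equation (\ref{eq:rwithphi})---the outer ceiling, applied to a number in $(0,1)$, simply flagging the idealized target $\Lambda_{ij}\to 1^-$ that the winning event approaches as $\bar{\tau}_{ij}\to\bar{R}_j^+$. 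I would also remark that the statement presupposes a nonempty admissible set; if $\Lambda^o_{ij}=\emptyset$ then $v_j$ is not activated at $T_o$ and the conclusion is vacuous.

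For the second case $\phi_j = R_j$ forces $\bar{R}_j = 0$: $v_j$ is not refractory at $T_o$, so any arriving signal activates it. Now $\Lambda_{ij} = \bar{R}_j/\bar{\tau}_{ij} = 0$ for every $v_i$ with $\bar{\tau}_{ij}\neq 0$, so the refraction ratio is uniformly zero, $\Lambda^o_{ij}$ cannot discriminate among candidates, and the monotonicity argument collapses because $g\equiv 0$. I would therefore fall back on the raw arrival order and conclude that the winner is the vertex attaining $\min_i \bar{\tau}_{ij}$ over $v_i\in H_j(v_i)$, which is the second displayed equation of the theorem. Consistency of the observed state at $T_o$ is what lets me restrict this minimum to signals still in transit ($\bar{\tau}_{ij}>0$): a signal with $\bar{\tau}_{ij}\le 0$ would already have reached the non-refractory $v_j$ before $T_o$ and hence activated it, contradicting $\bar{R}_j = 0$.

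The step I expect to be the main obstacle is not the computation but the bookkeeping that turns the verbal ``first signal after recovery'' rule into the clean inequality $\bar{\tau}_{ij}>\bar{R}_j$ and identifies that inequality with membership in $\Lambda^o_{ij}$: in particular, justifying carefully that signals arriving during the refractory window are lost with no side effect on $\bar{R}_j$, and that negative effective latencies are excluded by the very consistency of the state being observed at $T_o$. Once that is in place, the $\phi_j\neq R_j$ case is one line of monotonicity of $t\mapsto \bar{R}_j/t$, and the $\phi_j = R_j$ case is the isolated degeneracy where that map is constant and the $\min\bar{\tau}_{ij}$ criterion is forced.
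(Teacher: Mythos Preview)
Your proposal is correct and follows essentially the same route as the paper: both arguments reduce the $\phi_j\neq R_j$ case to the monotonicity of $t\mapsto \bar{R}_j/t$ on the admissible set $\{\bar{\tau}_{ij}>\bar{R}_j\}$, so that the smallest admissible effective latency corresponds to the largest ratio in $\Lambda^o_{ij}$, and both treat $\phi_j=R_j$ as the degenerate case where $\bar{R}_j=0$ collapses every ratio to zero and forces the raw $\min_i\bar{\tau}_{ij}$ criterion. Your version is somewhat more careful in the bookkeeping---explicitly noting the identification of admissibility with membership in $\Lambda^o_{ij}$, the vacuous case $\Lambda^o_{ij}=\emptyset$, and the consistency argument excluding $\bar{\tau}_{ij}\le 0$ when $\bar{R}_j=0$---but these are refinements rather than a different strategy.
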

\begin{proof}
Let the set $\Lambda^o_{ij}$ be defined as in equation \ref{eq:BigLSet}. When $\phi_j \neq R_j$, it guarantees that $\bar{R}_j \neq 0$. This means that the effective refractory period of $v_j$ is either partially recovered from the absolute refractory period at the observation time $T_o$ if $0 < \phi_j < R_j$, or else  $\bar{R_j} = R_j$ when $\phi_j = 0$ (\emph{c.f.} equation \ref{eq:effectiveR}). If $v_j$ is still refractory for some period $\bar{R}_j$, the signal with the smallest $\bar{\tau}_{ij}$ may not be the signal that activates $v_j$ because it may arrive before $\bar{R}_j$ ends. In this case, the signal that arrives first immediately after $\bar{R}_j$ ends will satisfy $H_j[v_i] \leadsto v_j$.  Because of how the refraction ratio and the set $\Lambda_{ij}$ are defined, with $\Lambda_{ij} \in \Lambda^o_{ij} < 1$, (equation \ref{eq:Rovert}), $H_j[v_i] \leadsto v_j$ will be met by the vertex $v_i$ that satisfies $\lceil \max(\Lambda_{ij}) \rceil $. Where the notation $\lceil \cdot \rceil $ indicates the ceiling function, as is typical. To see this, assume some constant value $\bar{R}_j = a > 0$. Let $\bar{\tau}_{(I+1)j} > \bar{\tau}_{Ij}$ for two $\bar{\tau}_{(I+1)j}$ and $\bar{\tau}_{Ij}$ with $\Lambda_{ij} \in \Lambda^o_{ij}$. Then $a/\bar{\tau}_{(I+1)j} > a/\bar{\tau}_{Ij}$. The closest of the two values of $\bar{\tau}_{ij}$ to '$a$' then is $\bar{\tau}_{Ij}$. For a set $\Lambda^o_{ij} = a/\bar{\tau}_{Ij} > a/\bar{\tau}_{(I+1)j} > \cdots a/\bar{\tau}_{ij} \cdots a/\bar{\tau}_{Nj}$, the condition $\bar{\tau}_{ij} \rightarrow \bar{R}_j ^+$ first relative to all $\Lambda_{ij} \in \Lambda^o_{ij}$ is the largest ratio closest to unity, i.e. $\lceil \max(\Lambda_{ij}) \rceil $.

On the other hand, if $\phi_j = R_j$ then $\bar{R}_j = 0 \Rightarrow \Lambda_{ij} =0$ and the 'winning' vertex $H_j[v_i] \leadsto v_j$ will be given by $\min({\bar{\tau}_{ij}) \forall v_i \in H_j(v_i})$, since $\Lambda_{ij} = 0$ only when $\bar{R}_j = 0$ and $\bar{y}_j(\Omega_j,t) = 1$. In other words, if $v_j$ is not refractory at $T_o$ then the first signaling event that reaches it is guaranteed to activate $v_j$. This will be given by the trivial condition of the signal with the smallest effective latency $\bar{\tau}_{ij}$.
\end{proof}

Alternatively, we can equivalently express theorem \ref{theo:theo1} as follows:
\begin{theorem}
Assume $v_j$ has an effective refractory period $\bar{R}_j$ at an observation time $T_o$. Then, for each $v_i \in H_j(v_i)$ with associated $\bar{\tau}_{ij}$, the condition $H_j[v_i] \leadsto v_j$ will be satisfied by \label{theo:theo2}
\begin{equation} 
\label{eq:rwithnophi2b}
 \min \lbrack (\bar{\tau}_{ij} - \bar{R}_j) > 0 \rbrack
\end{equation}
\end{theorem}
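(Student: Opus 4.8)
The plan is to obtain Theorem~\ref{theo:theo2} as a direct reformulation of Theorem~\ref{theo:theo1}, resting on the single observation that the effective refractory period $\bar{R}_j$ is fixed (independent of $i$) at a given observation time $T_o$. Consequently, minimizing $\bar{\tau}_{ij}-\bar{R}_j$ over those $v_i$ for which this difference is positive is literally the same as minimizing the arrival time $\bar{\tau}_{ij}$ over those $v_i$ whose signal reaches $v_j$ strictly after the residual refractory window has elapsed — which is exactly the intuitive characterization of the winning vertex already used to motivate Theorem~\ref{theo:theo1}.

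First I would recall that characterization: $H_j[v_i]\leadsto v_j$ is the vertex whose signal is the first to arrive at $v_j$ after $v_j$ has ceased being refractory. In terms of effective latencies this amounts to two conditions on the winner — (i) its signal arrives after the residual refractory period, $\bar{\tau}_{ij}>\bar{R}_j$, i.e. $\bar{\tau}_{ij}-\bar{R}_j>0$, matching the sign constraint in equation~\ref{eq:rwithnophi2b}; and (ii) among all $v_i$ satisfying (i), its signal arrives earliest, i.e. it has the smallest $\bar{\tau}_{ij}$. Since $\bar{R}_j$ is the same for every $v_i\in H_j(v_i)$, ``smallest $\bar{\tau}_{ij}$ subject to (i)'' is identically ``smallest $\bar{\tau}_{ij}-\bar{R}_j$ subject to $\bar{\tau}_{ij}-\bar{R}_j>0$,'' which is precisely the selection rule of equation~\ref{eq:rwithnophi2b}. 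I would then note this is well posed: because $H_j(v_i)$ is finite the relevant set of values is finite, and it is nonempty exactly when some signal is still to arrive after $\bar{R}_j$, i.e. exactly when $v_j$ can be activated at all, so the minimum exists (ties broken by whatever convention is in force elsewhere).

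Next I would reconcile the two cases of Theorem~\ref{theo:theo1} against the single formula. When $\phi_j\neq R_j$ we have $\bar{R}_j>0$; for every $\Lambda_{ij}\in\Lambda^o_{ij}$ (all of which satisfy $\Lambda_{ij}<1$, i.e. $\bar{\tau}_{ij}>\bar{R}_j$, by equations~\ref{eq:Rovert} and \ref{eq:BigLSet}), the map $\bar{\tau}_{ij}\mapsto \Lambda_{ij}=\bar{R}_j/\bar{\tau}_{ij}$ is strictly order-reversing, so the $\bar{\tau}_{ij}$ closest to $\bar{R}_j$ from above — the minimizer of $\bar{\tau}_{ij}-\bar{R}_j$ — is exactly the maximizer of $\Lambda_{ij}$, which is the content of $\lceil\max(\Lambda_{ij})\rceil$ in equation~\ref{eq:rwithphi}. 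When $\phi_j=R_j$ we have $\bar{R}_j=0$ by equation~\ref{eq:rbar}, so the constraint $\bar{\tau}_{ij}-\bar{R}_j>0$ collapses to $\bar{\tau}_{ij}>0$ (every not-yet-arrived signal) and equation~\ref{eq:rwithnophi2b} collapses to $\min\bar{\tau}_{ij}$, matching the second clause of Theorem~\ref{theo:theo1}. Equivalence of the two theorems follows.

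The hard part, such as it is, will be the bookkeeping around degenerate or boundary configurations: vertices whose signal has already reached $v_j$ before $T_o$ (those with $\bar{\tau}_{ij}<0$, i.e. $\delta_{ij}<-\tau_{ij}$) must be argued to be consistently excluded by the constraint $\bar{\tau}_{ij}-\bar{R}_j>0$ rather than spuriously chosen as the minimizer, and one must justify that the \emph{minimum} in equation~\ref{eq:rwithnophi2b} is the correct discrete stand-in for the one-sided limit ``$\bar{\tau}_{ij}\to\bar{R}_j^+$'' used to define $\Lambda^o_{ij}$ — namely that in a finite set the infimum of the positive differences is attained and coincides with the first arrival after $\bar{R}_j$. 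Once these edge cases are dispatched, the rest is immediate from the argument already given for Theorem~\ref{theo:theo1}.
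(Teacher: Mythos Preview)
Your proposal is correct and follows essentially the same approach as the paper: both reduce Theorem~\ref{theo:theo2} to Theorem~\ref{theo:theo1} by observing that $\bar{R}_j$ is constant across all $v_i$ at a fixed $T_o$, so that maximizing the ratio $\Lambda_{ij}=\bar{R}_j/\bar{\tau}_{ij}$ over $\Lambda^o_{ij}$ is equivalent to minimizing the positive difference $\bar{\tau}_{ij}-\bar{R}_j$. Your treatment is in fact more thorough than the paper's own terse argument---you explicitly reconcile the $\phi_j=R_j$ case and address the well-posedness and boundary issues, whereas the paper's proof dispatches the equivalence in a single sentence without separately revisiting the $\bar{R}_j=0$ branch.
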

\begin{proof}
The necessary condition for $H_j[v_i] \leadsto v_j$ is the vertex with a signal that satisfies $\bar{\tau}_{ij} \rightarrow \bar{R}_j^+$ first, \emph{c.f.} proof of theorem \ref{theo:theo1}. Given pairs of value $(\bar{R}_j,\bar{\tau}_{ij})$ for each $\Lambda_{ij} \in \Lambda^o_{ij}$, and a constant value $\bar{R}_j$ at $T_o$, $max(\Lambda_{ij})$ must also satisfy $\min \lbrack (\bar{\tau}_{ij} - \bar{R}_j) > 0 \rbrack$, given that the ratio of $v_i$ has $\lceil \max(\Lambda_{ij}) \rceil $.  
\end{proof}

Algorithmically, equation \ref{eq:rwithnophi2b} is more efficient to implement because one only needs to compute a difference compared to equation \ref{eq:rwithphi} which necessitates computing a ratio. This becomes significant when computing all $H_j \in G(V,E)$ in parallel. 

\subsection{Dynamics in an absolute versus relative temporal reference frame} \label{sec:abtime} In a strict sense, the criterion for successful activation of a node $v_j$ can be described using a continuous absolute time frame of reference without the need for considering an observation time $T_o$. This can be achieved in a straightforward way. Let $R^{0}_j$ be the start of the refractory period $R_j$ at the moment $v_j$ is activated by $H_j[v_i] \leadsto v_j$. The necessary condition for activation $\bar{\tau}_{ij} \rightarrow \bar{R}_j^+$ can then be re-stated as $R^0_j + R_j  < t_i + \tau_{ij}$. $R^0_j + R_j$ reflects the time at which $v_j$ stops being refractory. If the node becomes refractory at $R^0_j$ and its refractory period lasts for a period $R_j$, then in  it will stop being refractory after $R^0_j + R_j$. The condition for activation will be met when the signal from $v_i$ arrives after this time, namely, after $t_i + \tau_{ij}$. Correspondence with the description above that makes use of a relative framework of observation time can be recovered by the relationship $R^0_j + R_j  - T_o < t_i + \tau_{ij}- T_o$.

While mathematically accurate, there are a number of important functional and practical limitations to this approach that necessitate the consideration of the refraction ratio not from the perspective of absolute continuous time, but from a relative observation or measurement time. A formulation of the model in terms of the absolute times $t_i$ and $R^0_j$ requires that one be able to observe the network from the start of the evolution of the dynamics. In other words, from a dynamically quiescent state. This is because one has to be able to observe the various times $t_i$ and $R^0_j$ for all the nodes in the network as they happen in order to compute the corresponding refraction ratios. A problem arises however if one needs to calculate refraction ratios for a network that has not been observed from the start of its dynamics. In this case, past times $t_i$ and $R^0_j$ from the moment the dynamics starts are unknowable. This is indeed the case for networks that never ‘turn off’ e.g. the brain, or in the context of machine learning frameworks that make use of this model whereby the machine is initiated and allowed to run for a period of time prior to its observation. For example, at the time when data for learning is presented to it and the network computationally interrogated. Similarly, exploring the refraction ratio of neurobiological preparations makes it impossible to know the entire dynamical history of the system. While making appropriate measurements at some $T_o$ may be difficult depending on the system being studied, one is at least guaranteed that computation of the winning vertices and evolution of the dynamics can, in principle, be computed, and avoids a condition that would guarantee an inability to track the system's dynamical history. Namely, needing to know unobserved measurements from the past. 

A related consideration is that even if one were able to observe the dynamics from its inception (i.e. the total history of the system) it could be very computationally expensive to do so. But by referencing a relative observation or measurement time $T_o$, the same calculations are based only on the current (observable) present state of the dynamics, not on its past. This could save a significant amount of memory resources, since the amount of information about the system that needs to be retained is much much smaller.

\section{Extensions beyond the basic construction}
In this section we introduce a number of natural extensions to the basic construction of the framework. In the first three subsections we discuss a probabilistic version of the framework, an approach for dealing with inhibitory inputs, and how the explicit contribution of the internal processing time of nodes affects the timing of subsequent output signals, which in turn affects the refraction ratio. In the last subsection we conclude by considering a version of the framework that accounts for a fractional contribution of summating signals. This in effect represents a geometric dynamic version of the classical perceptron. Note that each of these are not mutually exclusive, and one can consider an implementation of the framework that simultaneously accommodates any or all of them, further adding to the dynamical richness of the model.  \label{sec:ext}

\subsection{Probabalistic extension of the framework}
The construction of the framework as introduced so far is deterministic, in that one individual $v_i$ node is capable of activating $v_j$, $H_j[v_i] \leadsto v_j$. In other words, an activation of $v_j$ by $v_i$,  is guaranteed to produce an output. This is equivalent to stating that the probability of $v_j$ responding to a 'winning' signal from $v_i$ is one. However, we can extend these concepts to add a probabilistic element to the framework. For example, biological neural networks are not deterministically precise, and are often subject to random fluctuations in how one neuron signals another due to thermal dynamical and sub-diffusion considerations associated with molecular events such as neurotransmitter vesicles crossing the synaptic cleft; or the stochastic nature of binding events on the postsynaptic membrane. Such a probabilistic extension would more accurately capture these aspects of the neurophysiology. An open question however, is to what degree such probabilistic considerations improve the functional properties of the model in an engineering sense.  \label{sec:prob}

We assign a probability distribution $P_{ij}$ to the likelihood of activation of $v_j$ for a winning vertex $H_j[v_i] \leadsto v_j$. $v_j(P_{ij})$ indicates the output probability of $v_j$ for some threshold $P_{threshold}$ given that the condition $H_j[v_i] \leadsto v_j$ has occurred:
\begin{equation} \label{eq:vj1}
 v_j(P_{ij}) =\begin{cases}
    1, & \text{if } H_j[v_i] \leadsto v_j \text{ and } v_i(P_{ij}) \geq P_{threshold}\\
    0, & \text{if } H_j[v_i] \leadsto v_j \text{ and } v_i(P_{ij}) < P_{threshold}
  \end{cases}
\end{equation} 
It is important to realize that equation \ref{eq:vj1} does not specify \emph{what} the output from $v_j$ will be given $v_j(P_{ij}) =1$, which could be an actual signal if $v_i$ is excitatory or no signal if it is inhibitory (see section \ref{sec:inh} below), but only that $v_j$ \emph{does} respond in some way to $v_i$. 

\subsection{Inhibitory inputs into $v_j$} 
We can also extend the framework to include inhibitory inputs from $v_i$ in the following way: Given a 'winning' vertex $v_i$ $H_j[v_i] \leadsto v_j$, $v_j$ generates an output signal in turn if the input from $v_i$ was excitatory or does not produce an output but still becomes refractory if the input from $v_i$ was inhibitory. The functional result is the lack of an output from $v_j$ (i.e. no contribution to the activation of \emph{its} downstream vertices), while preventing further activation of $v_j$ until $\bar{R}_j$ ends. This is not the only way one can think of differentiating between excitatory and inhibitory inputs, but it reflects a simple approach. \label{sec:inh}

\subsection{Explicit contribution of the internal processing time of $v_i$} 
In some (most) cases, the activation of a node would not result in an instantaneous output, but require a finite period of processing time associated with the node's internal dynamics prior to the generation and output of a signal. This is a function of the internal dynamic model of the node independent of the network dynamics. The details of such an internal model can be node or node class specific. We do not need to make any such distinction or have any knowledge about what the internal model is. The consequence of an internal processing time on the dynamics of the network through an analysis of our framework is a contribution to the effective latency of the signaling event reaching downstream nodes. The longer the internal processing time the longer the effective latency will be. This is strictly different and independent from, i.e. can evolve in parallel with, the node's effective refractory period. If this internal processing time approaches the time scale of the signaling speed and refractory period then it will affect the dynamics of the network. If however, it is much smaller than both it will have a negligible effect on the dynamics and can be ignored. In the limit as it approaches zero it reflects a (near) instantaneous turn around time between when a signal that activates the node arrives and when that node outputs a signal in turn. \label{sec:intproc}

Before continuing we note a potential source of confusion with regards to the node subscript notation that has to be made clear. An internal processing time is a property of node $v_j$ following its activation. However, it affects the latency of an out-going signal from $v_j$ traveling along the edges it connects to downstream nodes. In the context of signals leaving $v_j$ its subscript in effect changes from $j$ to $i$, because it is now an input into the nodes it connects into. The subscript notation is of course relative to the context of the individual $v_i$,$v_j$ node pairs under consideration. This is the mathematical equivalent of the relative usage of the terms 'presynaptic' and 'postsynaptic' when considering biological neuronal signaling. (A postsynaptic neuron receiving postsynaptic potentials from its presynaptic neurons becomes the presynaptic neuron once its own action potentials reach its synaptic terminals and the cell passes along the `signal' to the neurons it then connects to.) We attempt to be as clear and explicit as possible regarding this distinction in the description that follows. 

Similar to the notation introduced previously, let $t_j$ be the time at which an activated node $v_j$ outputs a signal. Let $t^{in}_j$ be the period of the internal dynamic processing time by $v_j$. This is the time between when a 'winning' signal $H_j[v_i] \leadsto v_j$ arrives at $v_j$, and when $v_j$ actually sends out an output at $t_j$. 

In the description of the framework in section \ref{sec:model} we implicitly assumed that $t^{in}_j = 0$, reflecting an instantaneous output at the moment that $v_j$ is activated. $R_j$ must begin anywhere between the moment of activation of $v_j$ when $\bar{\tau}_{ij}$ for the winning node ends, and $t_j$. But this is strictly a property of the internal dynamics of $v_j$. If $t^{in}_j > 0$, i.e. if there is some period of internal processing time between when $v_j$ is activated and when it sends an output signal at time $t_j$, there will be a lag before $v_j$ initiates an outgoing signal. It affects the timing of when its signaling events reach the vertices it connects to. Recall that for any vertex $v_i$ we denote the time at which it initiates an outgoing signal as $t_i$. This is where the switch in index notation must occur. For an activated vertex $v_j$ with $t^{in}_j > 0$ we write here its signal initiating time as $t'_i$ to indicate that this particular $t_i$ corresponds to the activated node $v_j$.

From a computational perspective we can absorb the effect of $t^{in}_j > 0$ by an appropriate elongation in $\delta_{ij}$. Where again, note that at this point $v_i$ in the $\delta_{ij}$ and $\tau_{ij}$ subscripts in equation \ref{eq:ext} correspond to the original $v_j$ that became activated, thus requiring a switch in index from $i$ to $j$, and the $j$ index now refers to the downstream vertices the original $v_j$ vertex is directionally connected to and signaling. We can re-write equation \ref{eq:taubar} such that for any vertex with $t^{in}_j > 0$ we can express the effective latency $\bar{\tau}_{ij}$ as
\begin{equation} \label{eq:ext}
\bar{\tau}_{ij} = \tau_{ij} + \delta'_{ij} + t^{in}_j = \tau_{ij} + \delta_{ij}
\end{equation} 
where $\delta'_{ij}$ reflects the component of $\delta_{ij}$ that does not account for the extra time due to $t^{in}_j$. $\delta_{ij} = \delta'_{ij}$ modulo $t^{in}_j$. $\tau_{ij}$ is a computed quantity dependent on $s_{ij}$ and $d_{ij}$. But computing $\bar{\tau}_{ij}$ involves taking into account $\delta_{ij}$ such that $\bar{\tau}_{ij} = \tau_{ij} + \delta_{ij}$ \emph{c.f.} equation \ref{eq:taubar}. $\delta_{ij}$ affects when $v_j$ outputs or initiates a signal. The effect of $t^{in}_j > 0$ is to delay by some amount when that occurs at time $t'_i$, i.e. for a period of time $ t^{in}_j$. Because $\delta_{ij} \in \mathbb{R}$ it absorbs $\delta'_{ij}$.

There are three scenarios that fully describe the range of possible effects of an internal signaling delay. 1. a signal $H_j[v_i] \leadsto v_j$ arrives at $v_j$ after which some amount of internal processing time given by $t^{in}_j$ $v_j$ makes a decision to output a signal and does so instantaneously following that decision. 2. a signal $H_j[v_i] \leadsto v_j$ arrives at $v_j$ which makes an instantaneous decision to output a signal but it takes $t^{in}_j$ before the signal from $v_j$ actually goes out. Or 3. a combination of scenarios 1. and 2. whereby some fraction $t^{in}_j/A$ of $t^{in}_j$ represents the time required to make a decision and $t^{in}_j/B = t^{in}_j - (t^{in}_j/A)$ represents the time between when a decision is made and an output signal actually goes out. Note of course that independent of the magnitude of $t^{in}_j$ the ratio $A/[B(A-1)] = 1$. From a practical perspective however, we do not need to distinguish between these three conditions. From the perspective of the framework if $v_j$ produces a signal due to $H_j[v_i] \leadsto v_j$ then $t^{in}_j$ will effectively cause an elongation of $\bar{\tau}_{ij}$.

If for a specific system $t^{in}_j << \tau_{ij} \text{ and } R_j$ it would have no effect on the dynamics. For any real physical system $t^{in}_j$ must always be finite and greater than zero of course, but we can safely ignore its effects if its time scale is much shorter than then the dynamics of the network. The computation of the refraction ratio as given does not change. But the effect on $\Lambda_{ij}$ could affect the global dynamics of the network if $t^{in}_j$ is on the scale of $\tau_{ij} \text{ and } R_j$ .

\subsection{Geometric dynamic perceptrons: summation from fractional contributions of multiple nodes}
In this section we describe the summation of fractional contributions of multiple winning nodes towards the activation of $v_j$. Instead of a single node $v_i$ activating $v_j$, whether deterministically or probabilistically, we now consider a 'running' summation of contributions from a number of $v_i \in H_j(v_i)$ adding up to a threshold value that then activates $v_j$. Upon activation $v_j$ becomes refractory as previously described. Conceptually this represents a competitive refractory model extension of the classical notion of a perceptron, whereby the metric considerations, and therefore signaling latencies, of the model plays a critical role in determining node activation. These perceptrons can be thought of as having a geometric morphology or shape to account for computed latencies on the edges that represent the inputs into $v_j$, similar to biological neurons, e.g. see Fig. \ref{fig:neteg}b. The interplay between the latencies and timing of discrete signaling events on the input edges, and the evolving refractory state of $v_j$, determine the running summation towards threshold of signal contributions from arriving inputs. We assume weights and an activation function as per the standard perceptron model (see below). We will also introduce a decay function that provides a memory or history for previous arriving signals, resulting in diminishing but non-zero contributions towards the summation from inputs that arrived at previous times relative to an observation time $T_o$. This is how we account for a relative contribution of signals given their temporal offsets. Thus, from the perspective of the framework and refraction ratio, the computational prediction being made is not which $v_i \in H_j(v_i)$ will activate $v_j$ after $T_o$, but what \emph{subset} of $H_j(v_i)$ will do so. \label{sec:per}

Consider a $T_o$, and assume that $v_j$ is not refractory. As a function of  time the 'running' summation $\Sigma_r$ from $H_j(v_i)$ must reach a threshold $\Sigma_T$ in order for $v_j$ to activate at some time $t \geq T_o$. Once activated, $v_j$ becomes refractory for a period $R_j$ as usual. The specific contribution from one $v_i \in H_j(v_i)$ will be the value of a weight (synaptic strength), $w_{ij}$. As is typical in a perceptron, we assume a set of weights associated with all incoming connections $W_j = \{w_{ij}\}$. In our model, the maximum value of the contributing weight $w_{ij,max}$ occurs at the time that the signal from $v_i$ arrives at $v_j$ after the end of the relative refractory period for $v_j$, $\bar{R}_j$. This occurs at the time $(\bar{\tau}_{ij} - \bar{R}_j)$ (\emph{c.f.} Theorem 2); explicity, $w_{ij}(\bar{\tau}_{ij} - \bar{R}_j)$ for a time varying weight $w_{ij}(t)$ that decays over time. Beginning at the next time step, we assume the contributing value of the weight starts to decay as a function of time: $w_{ij}(t) < w_{ij,max}$ for times $t > (\bar{\tau}_{ij} - \bar{R}_j)$. In other words, there is a finite memory at future times to the arrival of a given signal, scaled to its weight value, that progressively decays over time to zero. This produces a complex and dynamic interplay between the  latencies of discrete signaling events relative to each other (which in turn encodes an underlying geometry to the perceptrons and the network), the magnitude of the contribution from the respective weights once they do arrive, the kinetics of the decay of the contributing weights as a function of time, and the timing of the refractory state and recovery from the refractory state of $v_j$. Excitatory versus inhibitory weights can be handled as discussed in section \ref{sec:inh} in the sense that inhibitory weights negatively impact $\Sigma_r$. 

Note also that this construction differs from the probabilistic extension of the deterministic version of the framework where one individual $v_i$ is capable of activating $v_j$. For that we assigned a probability distribution to the likelihood of activation of $v_j$ given a winning vertex $H_j[v_i] \leadsto v_j$. Here each $v_i \in H_j(v_i)$ contributes a component to the 'running' summation $\Sigma_r$. One could easily however extend the model further and construct a probabilistic fractional summation version by combining both approaches. This would add even further computational complexity to the total solution space represented by the combinatorial output of the network. However, we do not explicitly take this into account here.

We can again make use of the refraction ratios in the set $\Lambda^o_{ij}$ to compute when $\Sigma_r > \Sigma_T$ with a decaying memory.  Assume a non-refractory vertex $v_j$ at some observation time $T_o$.  We consider the value of the summation of weights into $v_j$ for all contributing $v_i \in H_j(v_i)$. We want to compute what subset of $H_j(v_i)$ will result in a summation that exceeds the threshold value $\Sigma_T$, and at what time $t \geq T_o$ this will occur. Once $v_j$ is activated, it becomes refractory for a period $R_j$. This will involve taking into account the weights $W_j$, values of $\bar{\tau}_{ij}$ for $v_i \in H_j(v_i)$, and the value of $\bar{R}_j$ at $T_o$.

For simplicity in the presentation, we first ignore the decaying memory function, and ask how can we write an expression for $\Sigma_r $. Consider $\Lambda^o_{ij}$ well ordered as in equation \ref{eq:BigLSet}. Due to the ordering of the ratios we can determine the ordering of the arrival of signaling events for each $v_i \in H_j(v_i)$ at $v_j$. This in turn, allows us to determine the contributing weights $w_{ij} \in W_j$ in temporal order that sum to reach $\Sigma_T$. This is the subset of $H_j(v_i)$ that fractionally contributes towards activating $v_j$. We can index this subset $\Lambda_M \subset \Lambda^o_{ij} = \{(\Lambda_{ij})_m \in \Lambda^o_{ij}:m = 1, 2 \ldots M\}$, and then compute 
\begin{equation} \label{eq:w1}
\Sigma_r = \sum_{m=1}^M (w_{ij,max})_m \geq \Sigma_T
\end{equation}
for the corresponding weights. This represents the subset of vertices that participate in activating $v_j$, assuming the summating contributing weights are persistently additive, in other words, do not decay. 

We can, however, take into account a temporal decay of each contributing weight by defining a function $D_i(t)$ that modulates each discrete signaling event following its arrival. When a signal from $v_i$ who's $\Lambda_{ij} \in \Lambda_M$ first arrives at $v_j$, $D_i(t) = 0$, such that there is no decay from $w_{ij,max}$. After some period of time $\xi$ we want $w_{ij} = 0$; with progressively decreasing values in between. We assume $D_i(t)$ is a monotonically increasing function to unity so that the value of $w_{ij}(t)$ progressively decays (eventually to zero) over some period of time. Formally, 
\begin{subequations}
\begin{equation}
D_i(\bar{\tau}_{ij} - \bar{R}_j) = 0
\end{equation}
\begin{equation}
D_i[(\bar{\tau}_{ij} - \bar{R}_j)  + \xi] = 1
\end{equation}
\begin{equation}
\begin{split}
\text{For } &(\bar{\tau}_{ij} - \bar{R}_j)  < t < (\bar{\tau}_{ij} - \bar{R}_j)  + \xi  \text{ if } t_n < t_{m} \\
& \text{then } D(t_n) < D(t_m) \text{, i.e. strictly increasing}
\end{split}
\end{equation}
\label{eq:decayeq}
\end{subequations}
Then, for each individual contributing $v_i$
\begin{equation} 
w_{ij}(t) = w_{ij,max} - w_{ij,max} \cdot D_i(t)
\end{equation}
where as above the domain of $D_i(t)$ is $(\bar{\tau}_{ij} - \bar{R}_j) \leq t \leq (\bar{\tau}_{ij} - \bar{R}_j) + \xi$, and its codomain is $0 \leq D_i(t) \leq 1$. In the same way that we do not restrict the form of the activation function of $v_j$, we intentionally define only a bounded but generalized $D_i(t)$, and do not explicitly restrict its form in any way. The actual rate of change and kinetics of the decay function will of course have a profound impact on the dynamics of contributing weights to $\Sigma_r$, but the exact form of $D_i(t)$ is allowed to vary.

Lastly, we need to consider that $D_i(t)$ for individual contributing $v_i$ will start at different times after the observation time $T_o$. This reflects a time shift to when the decay for each $w_{ij}$ begins relative to $T_o$. This shift determines what is the contribution of each $w_{ij}$ towards $\Sigma_r$ as time progresses. If we think of it in discrete time steps for example, at each successive step the values of $w_{ij}$ from previously arrived signals will be progressively decaying as new signals arrive. Each $w_{ij}$ will contribute different fractional amounts to $\Sigma_r$ up until the point that some $w_{Mj,max}$ due to a last arriving signal from $v_{mj}$ results in $\Sigma_r \geq \Sigma_T$, producing the activation of $v_j$.  Notice how the arrival times of the subset of contributing vertices from $H_j(_i)$ that result in $\Sigma_r \geq \Sigma_T$, taken from $(\Lambda_{ij})_M$, all meet the condition that $(\bar{\tau}_{ij} - \bar{R}_j)  < (\bar{\tau}_{Mj} - \bar{R}_j)$.  To compute the fractional contribution of any given $w_{ij}$ at the moment threshold is reached, in other words, at $t = (\bar{\tau}_{Mj} - \bar{R}_j)$, we need to compute the value of its decay function at $(\bar{\tau}_{Mj} - \bar{R}_j)$ relative to when it started. The decay of the weight from its maximum value will have started after the signaling event arrives. So what we need to compute is how far along the decay is, the value of $D_i(t)$, at the time that reflects the difference from when the signal arrived and when threshold was reached, i..e at $(\bar{\tau}_{Mj} - \bar{R}_j)$. Thus, the individual fractional contribution of each $w_{ij,max}$ can be computed using equation \ref{eq:w1} by evaluating $D_i(t)$ at $$t = (\bar{\tau}_{Mj} - \bar{R}_j) -(\bar{\tau}_{ij} - \bar{R}_j) = (\bar{\tau}_{Mj} - \bar{\tau}_{ij})$$ Finally, the entire summation that includes all weight fractional contributions for the subset of vertices in $H_j(v_i)$ that achieve the activation of $v_j$ that belong to $\Lambda_M$ will be
 \begin{equation} \label{eq:w2}
\Sigma_r = \sum_{m=1}^M (w_{ij,max})_m - (w_{ij,max})_m \cdot D_i(\bar{\tau}_{Mj} - \bar{\tau}_{ij}) \geq \Sigma_T
\end{equation}

Beyond the scope of this paper, we are exploring the construction and use of artificial neural networks constructed from geometric dynamic perceptrons. This work is part of the development of a fundamentally new machine learning architecture that does not necessitate statistical learning methods, and requires no exposure to training data.

\section{Visualizing the dynamics with Feynman-like diagrams} \label{sec:vis}
In this section we introduce two different versions of a diagram that provides a visual representation and summary of the dynamic model and signal summation process. While it takes a bit of explaining to understand how each is constructed, they provide a compact way of visualizing the combination of processes involved, and summarize a lot of information all at once. The inspiration for their structure are the well known Feynman diagrams from particle physics, although this is immaterial for the discussion here. Note that we do not compute or show data for actual networks, since the focus of the current paper is on the mathematics and theory of the model. However, the data computed by the algorithms is capable of generating these plots. Each version of the diagram summarizes the evolving temporal dynamics for a single $H_j(v_i)$ subgraph into a vertex $v_j$ (\emph{c.f.} fig. \ref{fig:neteg2}). There would exist one such diagram (of each version) for all the vertices in the network, i.e. for all $v_j \in G(V,E)$. The entire global evolution of the network dynamics is then summarized by the set of all such diagrams. The intent here however, is to introduce these diagrams as a visualization tool for the evolution of the local dynamics for a specific vertex $v_j$. The difference between the two versions is that in panel A, the entire causal history of $H_j(v_i)$ is preserved for all incoming vertices $v_i \in H_j(v_i)$, while in panel B the diagram shows the combined degree or amount of summation from all the $v_i$ vertices into $v_j$ at any given moment in time. We will discuss each in detail.

\subsection{Visualizing the entire causual history of the dynamics}
Figure \ref{fig:feynmanfig}A provides a visual record of the entire causal history of $H_j(v_i)$ for all incoming vertices $v_i \in H_j(v_i)$ as it evolves in time. In other words, the ordered casual sequence of activation events. Time is represented on the y-axis. So the temporal progression of the dynamics evolves as one moves up the diagram. The initial moment of observation of the network is indicated as the horizontal dotted line at $T_o$. The collection of dots indicating vectors that begin at $T_o$ represents a record of all signals already traveling on their respective edges of the $H_j(v_i)$ subgraph at the moment the network is observed. The x-axis represents the relative spatial positions of where traveling signals are relative to each other and the target vertex $v_j$ in a way we describe below. We will go through each part of the diagram in detail, with the numbering below corresponding to the indicated numbering in the figure. For simplicity and clarity, we only illustrate excitatory inputs, and do not show any inhibitory vectors in the diagram. 

\begin{enumerate}
\item The vertical blue lines represent the state $v_j$ itself. In this particular example as drawn here, at time $T_o$ $v_j$ is partially refractory  (with period of time $\bar{R}_j$ remaining) from a signaling activation at some time prior to $T_o$. This is indicated by the length of the solid blue vector progressing forward in time. Once it stops being refractory the solid blue vector changes to a dotted blue line, indicating $v_j$ is no longer refractory and capable of receiving and responding to subsequent incoming signals. Once signaling events again begin to arrive and summate, the dotted blue line changes to a squiggly blue line to show this. Eventually, once enough summation has occurred, an activation event occurs and $v_j$ again becomes refractory for a period of time, switching back to a solid blue vector. Now though, we observe the full refractory period $R_j$. Of course, the initial state of $v_j$ at $T_o$ for any particular diagram could start with any of these. 

\item Each of the dots at $t = T_o$ correspond to a signaling event that is part away along its edge. The vectors that begin at thesd dots end when they reach $v_j$. The slope of each vector provides a measure of the remaining time it takes the signal to arrive. One can read across to the time axis from the end of the vector once it hits the blue $v_j$ lines. The position along the spatial x-axis is relative in the sense that unlike the time axis, the origins of the vectors at each dot simply indicate how far away each signal is at $T_o$ from the spatial position of $v_j$ relative to each other, i.e. more to the left of $v_j$ correspond to signals that at $T_o$ need to travel farther than signals further to the right. Assuming a constant signaling speed or conduction velocity for all $s_{ij}$, as we have throughout the paper, the remaining time it takes signaling events that are already partially along their edges at the time the graph is observed at $T_o$ will be proportional to the distance on the edge left to travel, $|\bar{e}_{ij}|$. The slope of each vector reflects the remaining edge distance and therefore time required for the signal to reach $v_j$. Note that if the spatial axis was in real units, then the spacing between the origin of each vector would represent the actual physical distance of the signal from $v_j$ and assuming a constant $s_{ij}$ for all signals every vector would be at 45 degrees. This is another valid way of drawing it. Here we chose to keep the spatial axis relative in arbitrary units to keep the diagram more compact and allow the slope to encode the spatial information that produces the remaining time the signals take to reach $v_j$. 

In the illustrative example here, we see that the very first signal arrives while $v_j$ is still refractory and has no effect, the second signal begins a running fractional summation $\Sigma_r$. The next two signals add to $\Sigma_r$. The fourth arriving signal causes $\Sigma_r > \Sigma_T$ and $v_j$ fires, making it refractory again for the period $R_j$. 

\item Any vector that begins at a time later than $T_o$ represents a signal that left its origin from one of the $v_i$ vertices that connect into $v_j$ after the observation time $T_o$. Note how in the example for the vector shown here at position 3 in the diagram, the signal represented by this vector can represent either one of two possible conditions. 1. It originated at a node $v_i$ that did not have a signal partially traversing its edge at $T_o$, or 2. it corresponds to a signal from a $v_i$ that did have a partial signal at $T_o$ and fired again. If condition 2. then the signal must have originated from a vertex $v_i$ that corresponded to one of the first two vectors (signals) to its right at $T_o$, i.e. one of the two vertices closest to the spatial position of $v_j$ at $T_o$, since its total edge length $|{e}_{ij}|$ cannot be shorter than a partial edge distance of $|\bar{e}_{ij}|$ at $T_o$. To be fair, it is not possible to  distinguish these two conditions from the diagram as drawn without a vertex label. However, the algorithms readily keep track of this too if necessary. 

This diagram is able to visualize  the dynamics in a way that summarizes a lot of information that can be quickly gleamed. 

\item Finally, in position 4 in the diagram we show two signals originating from the same $v_i$ but occurring at different times. Note how because we assume that $s_{ij}$ is constant, the slopes for both vectors, which encodes the time it takes for these signals to travel the full $|e_{ij}|$ for a particular $v_i$, will be the same. 
\end{enumerate}

\subsection{Visualizing the dynanmics of the summations $\Sigma_r$}
In Figure \ref{fig:feynmanfig}B there is less information in the sense that  the signaling events that lead to the activation of $v_j$ are compactified into a single vector. But the result is the ability to intuitively visualize the progression of the running fractional summation $\Sigma_r$ that leads to the activation of $v_j$. The time axis on the y-axis remains the same, but the x-axis is no longer the relative spatial position of the signals from vertices $v_i \in H_j(v_i)$ that make up the subgraph. Rather, it is the degree or amount of $\Sigma_r$ itself. The fractional summation culminates at the vertical dotted line indicating the threshold summation value required to trigger an activation or firing of $v_j$ at $\Sigma_T$. 

\begin{enumerate}
\item Vectors in black progress show an accumulation or increase in $\Sigma_r$. Here, they represent the totality of summations producing an increase in $\Sigma_r$. This does not only mean the summation of excitatory inputs adding to $\Sigma_r$, but also counteracting inhibitory inputs that decrease or delay $\Sigma_r$ (depending on how inhibitory inputs are accounted for).  Importantly,  a component of this vector is the decay of weight contributions for both excitatory and inhibitory signals as determined by the decay equation \ref{eq:decayeq}. Black vectors that increase $\Sigma_r$ imply that excitatory summations outpace inhibitory summations that detract from $\Sigma_r$, due to a combination of the number of excitatory inputs and/or frequency of excitatory signals, .

\item Red vectors are the same as black vectors but represent the opposite process: net inhibitory summations that are greater than excitatory summations. This then reduces the value of $\Sigma_r$.

\item Through these processes, there will be a back and forth 'fluctuation' of summating events that push and pull $\Sigma_r$ to and away from $\Sigma_T$. This is the model equivalent of and directly analogous to sub-threshold membrane potential fluctuations in the dendritic trees of biological neurons. In real neurons spatial and temporal summation in dendrites attempt to reach the membrane threshold potential at the initial segment in order to trigger an action potential. As discussed beginning of the paper, this is in effect what our framework models and abstracts out from the neurobiology. It represents the main motivation for this work. In the example shown here at position 3 in the diagram, an inhibitory summation process that is lowering $\Sigma_r$ all of a sudden reverses because the summation of excitatory events overtakes it and starts driving $\Sigma_r$  towards $\Sigma_T$.

\item Here an activation event is illustrated. $\Sigma_r$ reaches $\Sigma_T$, a red vector must follow for at least a period corresponding to $R_j$ due exclusively to the decay function driving down $\Sigma_r$, since $v_j$ will not be able to respond to inputs, excitatory or inhibitory. 

\item Once $R_j$ ends, if the summation conditions are right, $\Sigma_r$ will begin to increase again. But note how the red vector can continue longer than the period dictated by $R_j$ if summation conditions are not favorable. 

\end{enumerate}

It should be clear to the reader how the multiple summating vectors in panel A that trigger an activation at $v_j$ correspond to the equivalent compactified perspective in panel B. The entire mathematical description introduced in the preceding sections of the paper can be visualized in these two ways. In subsequent work, we will explore the use of these Feynman-like diagrams  to study how the local rules that drive the dynamics of the individual $H_j(v_i)$ subgraphs that these diagrams represent produce the global behaviors and dynamics of the overall network. 

\begin{figure}
\begin{center}
\includegraphics[width=6.5in]{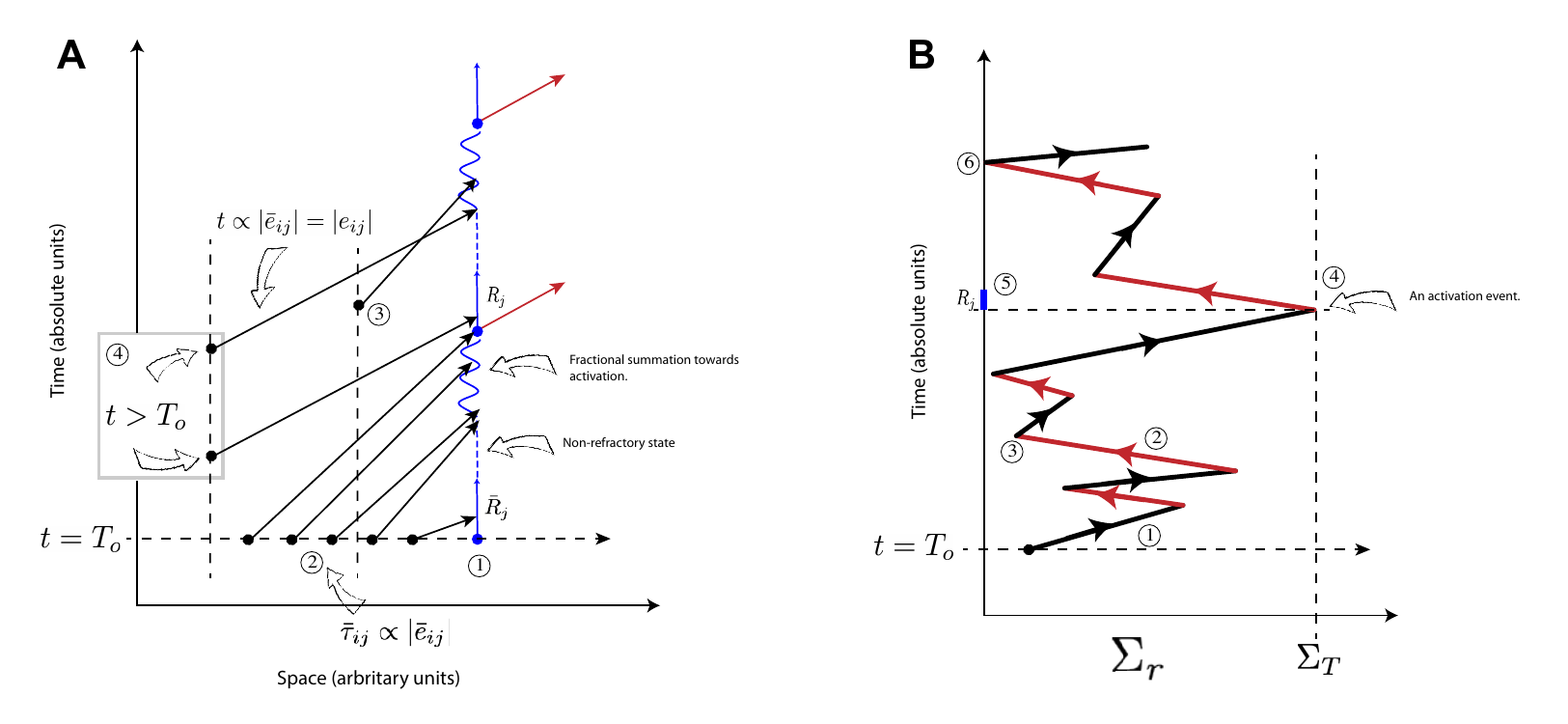} 
\caption{Feynman-like diagrams for visualizing the competitive refractory model and fractional summation of the signal summation process. \textbf{A.} The entire causal history of $H_j(v_i)$ is shown for all incoming vertices $v_i \in H_j(v_i)$. \textbf{B.} A compactified representation of the dynamics which shows the combined amount of summation $\Sigma_r$ from all the $v_i$ vertices into $v_j$ at any given moment in time. See the text for details.}
\label{fig:feynmanfig}
\end{center}
\end{figure}

\section{Efficient signaling  } 
In this last section we prove a notion of optimized efficient signaling in the context of the framework and the refraction ratio. These arguments emerge naturally from a consideration of how the ratio is defined. For convenience, we limit our discussion to the deterministic version of the framework. Given an effective refractory period $\bar{R}_j$ and effective latency $\bar{\tau}_{ij}$ along an  edge $e_{ij}$, $H_j[v_i] \leadsto v_j$, is dependent on the $\lim_{\bar{\tau}_{ij} \rightarrow \bar{R}^+_j}$ for $\Lambda_{ij} = \bar{R}_j/\bar{\tau_{ij}}$. In other words, the first discrete signal that arrives at $v_j$ after it stops being refractory. Intuitively, efficient signaling in the context of the framework means that there should be a temporal match between the arrival time of the signals relative to how quickly nodes can internally process signals. When a mismatch between these two considerations occurs, it can cause a break down in the signaling dynamics of the network. For example, we have previously shown numerically (in computational simulations) that if  signaling speeds $s_{ij}$ are too fast, or equivalently, if the latencies $\bar{\tau}_{ij}$ are too short, compared to the amount of time a node requires to process a signal and recover to a state in which it can respond again to another input, the network will not be able to sustain internal recurrent activity \cite{buibas}. This reflects a form of inefficient signaling in the sense that the network requires constant external driving (energy) to sustain it.

Conversely, if $s_{ij}$ is too slow or the set of $\bar{\tau}_{ij}$ too long then the network will be inefficient in a very different sense. It has the potential for faster dynamic signaling that is not being realized. Time, as a resource, is being wasted in such a network. The following discussion formalizes these concepts. \label{sec:optflow}

We first derive upper and lower bounds for the refraction ratio between connected vertices $v_i$ and $v_j$. We argue that these bounds formalize a notion of optimal efficient signaling that reflect a match between the rate of discrete signal propagation (information flow) between vertices relative to the internal dynamics of the vertices. 
 
\begin{theorem} [Optimized refraction ratio theorem] 
Let $G=(V,E)$ represent a geometric network consisting of subgraphs $H_j(v_i)$. For each $v_i v_j$ vertex pair with a signaling speed $s_{ij}$ between $v_i$ and $v_j$, the optimal refraction ratio $[ \Lambda_{ij} ]_{opt}$ at an observation time $T_o$ is bounded by \label{theo:theo3}
\begin{subequations} \label{eq:opt}
\begin{align}
& [ \Lambda_{ij} ]_{opt} = \lim_{\tau_{ij} \rightarrow R_{j}^+} \Lambda_{ij} \text{ when } \phi_j \text{ and } \delta_{ij} = 0 &&\text{ [Upper bound ]}  
\\ & [ \Lambda_{ij} ]_{opt} \Rightarrow \lim_{\delta_{ij} \rightarrow -\phi_j} \Lambda_{ij} \text{ when } \phi_j = R_j &&\text{ [Lower bound]}
\end{align}
Given these bounds then, an optimized refraction ratio will be such that 
\begin{equation}
[ \Lambda_{ij} ]_{opt} = \lim_{\tau_{ij} \rightarrow \bar{R}^+_j} \frac{\bar{R}_j}{\tau_{ij}} \rightarrow 1
\end{equation}
\end{subequations}
\end{theorem}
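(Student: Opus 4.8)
The plan is to treat \emph{efficient signaling} as an explicit optimization target and then read off the value of $\Lambda_{ij}$ that it forces, using Theorems \ref{theo:theo1} and \ref{theo:theo2} together with the defining inequality $\Lambda_{ij} < 1$ for every $\Lambda_{ij} \in \Lambda^o_{ij}$ (equation \ref{eq:BigLSet}). First I would fix an observation time $T_o$ and a winning vertex $H_j[v_i] \leadsto v_j$, so that by the proof of Theorem \ref{theo:theo1} the governing condition is $\bar{\tau}_{ij} \to \bar{R}_j^+$. I would then formalize inefficiency as the occurrence of either of two ``wasted resource'' events: a signal reaching $v_j$ while $\bar{R}_j$ has not yet elapsed, so that the signal is discarded (the ``too fast'' regime), or $v_j$ sitting idle after $\bar{R}_j$ has elapsed with no signal yet in transit, so that time is wasted (the ``too slow'' regime). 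Efficient signaling is exactly the boundary between these two regimes, i.e. $\bar{\tau}_{ij}$ pressed as close to $\bar{R}_j$ from above as the dynamics permit, which by equation \ref{eq:Rovert} is the statement $\Lambda_{ij} \to 1^-$.

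Next I would establish the upper bound. Take the phase configuration in which neither the refractory period nor the latency is offset at $T_o$: $\phi_j = 0$, so that $\bar{R}_j = R_j$ takes its largest admissible value (equation \ref{eq:rbar}), and $\delta_{ij} = 0$, so that $\bar{\tau}_{ij} = \tau_{ij}$ (equation \ref{eq:taubar}). The efficiency constraint then reads $\tau_{ij} \to R_j^+$, whence $\Lambda_{ij} = R_j / \tau_{ij} \to 1^-$. That $1$ is genuinely an upper bound --- not merely the value attained in this particular configuration --- follows because every winning ratio lies in $\Lambda^o_{ij}$ and is therefore strictly below $1$ no matter how the offsets are chosen; no admissible perturbation of $\phi_j$ or $\delta_{ij}$ that keeps the vertex winning can carry $\Lambda_{ij}$ past $1$, and the no-offset configuration already realizes this supremum in the limit.

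For the lower bound I would take the opposite phase extreme, $\phi_j = R_j$, so that $\bar{R}_j = 0$ and, by Theorem \ref{theo:theo2}, the winning rule collapses to selecting the smallest $\bar{\tau}_{ij} > 0$; here $\Lambda_{ij} = 0 / \bar{\tau}_{ij} = 0$, giving the trivial floor $\Lambda_{ij} \ge 0$ that holds in all cases since $\bar{R}_j \ge 0$ and $\bar{\tau}_{ij} > 0$. The substantive claim is that, along the matched regime $\tau_{ij} = R_j$ singled out above, the limit $\delta_{ij} \to -\phi_j$ coincides with the signal arriving essentially at $T_o$ (so $\bar{\tau}_{ij} \to 0^+$ as well), and the resulting $0/0$ limit is again $1$: writing $\phi_j = R_j - \varepsilon$ and $\delta_{ij} = -\phi_j$ with $\varepsilon \downarrow 0$ gives $\bar{R}_j = \varepsilon$ and $\bar{\tau}_{ij} = \tau_{ij} - \phi_j = \varepsilon$, so $\Lambda_{ij} = \varepsilon / \varepsilon \to 1$. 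Thus both phase extremes squeeze $[ \Lambda_{ij} ]_{opt}$ toward $1$.

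Finally I would combine the pieces: for any genuine winning vertex $\Lambda_{ij} < 1$ strictly, yet the extremal configurations force $\Lambda_{ij} \to 1$ once the efficiency constraint is imposed, so the optimal ratio is $[ \Lambda_{ij} ]_{opt} = \lim_{\tau_{ij} \to \bar{R}^+_j} \bar{R}_j / \tau_{ij} \to 1$, approached from below, which is the stated conclusion. I expect the main obstacle to be the first step --- pinning ``efficient signaling'' down precisely enough to behave as an optimization target while respecting the paper's standing decision not to model the internal dynamics of $v_j$ --- together with making the lower-bound $0/0$ limit rigorous, since it requires coupling the two nominally independent offsets $\phi_j$ and $\delta_{ij}$ through the matching condition $\tau_{ij} = R_j$ rather than letting them vary freely.
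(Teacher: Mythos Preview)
Your proposal is correct and follows essentially the same route as the paper: both arguments examine the two phase extremes $\phi_j = 0$ and $\phi_j = R_j$, impose the associated constraints on $\delta_{ij}$, and show that each extreme forces $\Lambda_{ij} \to 1$ once the activation condition $\bar{\tau}_{ij} \to \bar{R}_j^{+}$ from Theorems~\ref{theo:theo1}--\ref{theo:theo2} is enforced. The only real difference is emphasis --- the paper casts optimality as the intrinsic latency $\tau_{ij}$ being able to span the full dynamic range $0 \le \bar{R}_j \le R_j$ \emph{without} $\delta_{ij}$-compensation, whereas you cast it as eliminating the ``too fast'' and ``too slow'' failure modes --- and your $\varepsilon$-resolution of the lower-bound $0/0$ limit is actually more explicit than the paper's own handling of that case.
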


\begin{proof}
 The necessary condition for the activation of $v_j$ by $v_{i} \in H_j(v_i)$ is $\bar{\tau}_{ij} > \bar{R}_j$. By equation \ref{eq:rbar} $\bar{R}_j = R_j - \phi_j \text{ where } 0 \leq \phi_j \leq R_j$, which implies that $0 \leq \bar{R}_j \leq R_j$. $\bar{R}_j$ is bounded by its very construction. The absolute lower bound on $\bar{R}_j$ implies that activation of $v_j$ by a $v_{i} \in H_j(v_i)$, $\bar{\tau}_{ij} \rightarrow \bar{R}^+_j$, will be achieved when $\bar{\tau}_{ij} \rightarrow 0^+$, i.e. is just slightly larger than zero, and the absolute upper bound implies that $\bar{\tau}_{ij} \rightarrow R_j^+$. But note how $\bar{\tau}_{ij}$ can always achieve these bounds independent of $\tau_{ij}$ for a given $v_i$ $v_j$ pair at $T_o$ because by equation \ref{eq:taubar} $\delta_{ij} \in \mathbb{R}$, i.e. any vertex $v_i$ can activate $v_j$ independent of the absolute latency $\tau_{ij}$ by delaying the initiation of an output signal at $v_i$ long enough if $\tau_{ij}$ is too short or initiating a signal at $v_i$ prior to $T_o$ if $\tau_{ij}$ is too long. However, by equation \ref{eq:BigLSet} and theorem \ref{theo:theo1}, $\bar{\tau}_{ij}$ need only be slightly larger than $\bar{R}_j$ in order to successfully signal $v_j$, i.e. $\bar{\tau}_{ij} \rightarrow \bar{R}_j^+$. Because $\bar{R}_j$ is naturally bounded by $0 \leq \bar{R}_j \leq R_j$, it follows that the optimal signaling condition will be given by $\tau_{ij} \rightarrow \bar{R}_j^+$ for values of $\delta_{ij}$ not too smaller than zero or not too greater than zero in order to meet the condition that $\tau_{ij} \rightarrow \bar{R}_j^+$ while avoiding compensation by $\delta_{ij}$. In other words, the response dynamic range for any $v_j$ will always be bounded by the limits of $\bar{R}_j$ in the sense that these limits determine the temporal properties of when $v_j$ can actively participate in network signaling and when it cannot. Ultimately, of course, this is a function of $v_j$'s internal dynamics, which in turn determines $R_j$ and $\bar{R}_j$. No value of $\tau_{ij}$ need be much greater than $R_j$ for any $v_i$ with a directed edge $e_{ij}$ into $v_j$. When the condition $\tau_{ij} \rightarrow \bar{R}_j^+$ is met, it ensures that such a $\tau_{ij}$ is guaranteed to be able to operate over the entire response dynamic range of $v_j$, i.e. all values of $\bar{R}_j$. Given this condition for optimized signaling in the context of the refraction ratio then, for the upper bound this optimized boundary condition will occur when $\tau_{ij} \rightarrow R_j^+$ when $\phi_j$ and $\delta_{ij} = 0$ because it represents the upper achievable limit for $\bar{R}_j$ (when $\phi_j = 0$) and forces the optimal condition that $\tau_{ij} \rightarrow R_j^+$ without compensating with $\delta_{ij}$. 

For the lower bound the optimal condition is given by $\bar{\tau}_{ij} \rightarrow 0^+ \Rightarrow \delta_{ij} \rightarrow -\phi_j$ when $\phi_j = R_j$, since when $\phi_j = R_j \Rightarrow \bar{R}_{ij} = 0$. Forcing the condition that $\phi_j = R_j$ implies that $\tau_{ij}$ on its own is capable of meeting the lower bound without compensation by $\delta_{ij}$. We can define an optimized bound as $|\tau_{ij} - R_j| < \epsilon$ for some bounded error $\epsilon$.  If $\bar{\tau}_{ij}$ is too short, either because the path length of $e_{ij}$ is too short or $s_{ij}$ is too fast, this implies that given $R_j$, $\bar{\tau}_{ij} \nrightarrow R_j^+$ if $\delta_{ij} \rightarrow 0$. To achieve the lower bound it would require $\delta_{ij} < 0$ so that $\bar{\tau}_{ij} < \tau_{ij}$. To achieve the upper bound it would require $\delta_{ij} > 0$ so that $\bar{\tau}_{ij} > \tau_{ij}$. If $\bar{\tau}_{ij}$ is too long, either because the path length of $e_{ij}$ is too long or $s_{ij}$ is too slow, this implies that given $R_j$, $\bar{\tau}_{ij} \nrightarrow 0^+$ if $\delta_{ij} \rightarrow -\phi_j$. When these constraints are met, it ensures that $\lim_{\tau_{ij} \rightarrow \bar{R}^+_j} \bar{R}_j/\tau_{ij} \rightarrow 1$. For the lower bound the important condition is that $\bar{\tau}_{ij} \rightarrow 0^+$ when $\delta_{ij} \rightarrow -\phi_j$. It is trivial what $\delta_{ij}$ is, since this condition will always be met when $\delta_{ij} = -\tau_{ij}$. But for the upper bound the important condition is that $\tau_{ij} \rightarrow R_j$ when $\delta_{ij} \rightarrow 0$, which implies that in every case $\bar{\tau}_{ij} \rightarrow \tau_{ij}$.
\end{proof}

 What this implies is that if $v_i$ satisfies the bounding conditions, it will always be within a range where it could efficiently 'win' and activate $v_j$ for any value of $\bar{R}_j$ at an observation time $T_o$ and time of signaling initiation by $v_i$ at $t_i$. By efficient we mean without forcing $\delta_{ij}$ to compensate. The given $v_i$ may not of course always 'win' in activating $v_j$ but it is asured to be as efficient as possible, as efficient as any other node in its signaling of $v_j$ over all values of $\bar{R}_j$. If $R_j$ is the same for all nodes in the network, $R_j = R$ $\forall j \in G(V,E)$, this in effect bounds the dynamic window over which all network dynamics, that is, all temporal information (signaling) processes, should occur on that network. Explicitly, this dynamic window is given by $\bar{R}_j = R_j - \phi_j \text{ for } 0 \leq \phi_j \leq R_j$. So there is no need or reason for any $\tau_{ij}$ to go far beyond this dynamic window in order to satisfy the optimality condition $\bar{\tau}_{ij} \rightarrow \bar{R}_j^+$. This is in essence the intuitive idea of using the bounds derived in theorem \ref{theo:theo3} to define optimized signaling or information flow between node pairs. Note that we must keep the explicit condition that $\phi_j = R_j$ because that forces $\Lambda_{ij} = 0$ only when $\delta_{ij} \rightarrow -\phi^+$. Otherwise, in the general case any value of $\bar{\tau}_{ij} >> 0$ will result in $\Lambda_{ij} = 0$ for any value of $\bar{R}_{j}$.

We can also prove the following corollaries.
\begin{corollary}
Given the conditions for lower and upper bounds in theorem \ref{theo:theo3}, let $[ \delta_{ij} ]_{upper}$ denote the value that $\delta_{ij}$ must take in order to achieve the optimal upper bound condition for some value of $\tau_{ij}$. Similarly, let $[ \delta_{ij} ]_{lower}$ denote the value that $\delta_{ij}$ must take in order to achieve the optimal lower bound condition. In every case, the relationship between $[ \delta_{ij} ]_{upper}$ and $[ \delta_{ij} ]_{lower}$ is given by
\begin{equation*}
[ \delta_{ij} ]_{lower} = [ \delta_{ij} ]_{upper} - R_j \\
\end{equation*}
\end{corollary}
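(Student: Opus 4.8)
The plan is to unwind the two bounding conditions of Theorem~\ref{theo:theo3} and solve explicitly for $\delta_{ij}$ in each case, holding the absolute latency $\tau_{ij}$ (equivalently, the edge geometry $d_{ij}$ and the signaling speed $s_{ij}$) fixed, since the corollary compares the two compensations needed on one and the same $v_iv_j$ edge. The organizing observation is that the upper and lower optimal conditions are both instances of the single optimality requirement $\bar{\tau}_{ij} \rightarrow \bar{R}_j^+$ from Theorem~\ref{theo:theo3}; they differ only in the value that is forced on $\phi_j$ (namely $\phi_j = 0$ for the upper bound and $\phi_j = R_j$ for the lower bound), which in turn pins down $\bar{R}_j$ via equation~\ref{eq:rbar}.

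First I would substitute $\bar{\tau}_{ij} = \tau_{ij} + \delta_{ij}$ (equation~\ref{eq:taubar}) and $\bar{R}_j = R_j - \phi_j$ (equation~\ref{eq:rbar}) into the optimality condition $\bar{\tau}_{ij} \rightarrow \bar{R}_j^+$. Read in the limiting (boundary) sense, this is $\tau_{ij} + \delta_{ij} = R_j - \phi_j$, so the value of $\delta_{ij}$ that realizes optimality for a given $\tau_{ij}$ and a given $\phi_j$ is $\delta_{ij} = R_j - \phi_j - \tau_{ij}$.

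Next I would specialize this formula to the two regimes. For the upper bound, Theorem~\ref{theo:theo3} forces $\phi_j = 0$, hence $\bar{R}_j = R_j$, and the formula gives $[\delta_{ij}]_{upper} = R_j - \tau_{ij}$. For the lower bound, Theorem~\ref{theo:theo3} forces $\phi_j = R_j$, hence $\bar{R}_j = 0$, and the same formula gives $[\delta_{ij}]_{lower} = -\tau_{ij}$ --- which is exactly the condition $\delta_{ij} = -\tau_{ij}$ (signal arriving at $v_j$ at $T_o$) noted in the proof of Theorem~\ref{theo:theo3}. Subtracting, the $\tau_{ij}$ terms cancel and $[\delta_{ij}]_{lower} - [\delta_{ij}]_{upper} = -R_j$, i.e. $[\delta_{ij}]_{lower} = [\delta_{ij}]_{upper} - R_j$, as claimed.

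I expect the only real subtlety --- hardly an obstacle --- to be the bookkeeping around the limits: one must read ``$\bar{\tau}_{ij} \rightarrow \bar{R}_j^+$'' as determining $\delta_{ij}$ only up to an arbitrarily small positive slack, so the identity is understood in that same limiting sense (or exactly, if one fixes one common infinitesimal offset in both scenarios). It is also worth stating explicitly that $\tau_{ij}$ and $R_j$ denote the same physical quantities in both scenarios, since otherwise the cancellation of $\tau_{ij}$ would not be meaningful; with that caveat the result is a one-line algebraic consequence of the definitions of $\bar{\tau}_{ij}$, $\bar{R}_j$, and the optimality condition.
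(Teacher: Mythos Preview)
Your proposal is correct and follows essentially the same route as the paper: solve $\bar{\tau}_{ij}\rightarrow\bar{R}_j^+$ for $\delta_{ij}$ at the two extremal values of $\phi_j$, obtaining $[\delta_{ij}]_{upper}=R_j-\tau_{ij}$ and $[\delta_{ij}]_{lower}=-\tau_{ij}$, and subtract. The only cosmetic difference is that you first write the unified formula $\delta_{ij}=R_j-\phi_j-\tau_{ij}$ and then specialize, whereas the paper treats the two cases separately before taking the difference; the computations and the handling of the limiting sense are otherwise identical.
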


\begin{proof}
The condition for the upper bound is $\tau_{ij} \rightarrow R_j^+$ for $\delta_{ij} = 0$ (and $\phi_j =0$). Since $\bar{\tau}_{ij} = \tau_{ij} + \delta_{ij}$, if $\tau_{ij} \neq R_j$ then $\bar{\tau}_{ij} \nrightarrow R_j^+$. For a given $\tau_{ij}$ and for a known or measurable $R_j$, $[ \delta_{ij} ]_{upper} = -(\tau_{ij} - R_j)$, since this is what the value of $\delta_{ij}$ would have to be in order to achieve the optimal condition. For the lower bound $\bar{\tau}_{ij} \rightarrow 0^+$ when $\delta_{ij} \rightarrow -\phi_j$ given that $\phi_j = R_j$. Thus, if $\tau_{ij} -\phi_j = \tau_{ij} -R_j > 0$, or more correctly if $\bar{\tau}_{ij} \nrightarrow 0^+$, it implies that $[ \delta_{ij} ]_{lower} = -\tau_{ij}$ would be needed to meet the lower bound optimality condition. The difference between $[ \delta_{ij} ]_{lower}$ and $[ \delta_{ij} ]_{upper}$ is therefore $- \tau_{ij} - [- (\tau_{ij} - R_j)] = - R_j$. Thus, $[ \delta_{ij} ]_{lower} = [ \delta_{ij} ]_{upper} - R_j$. 
\end{proof}

\begin{corollary}
If a signal $v_i \in H_j(v_i)$ characterized by a latency $\tau_{ij}$ on the edge $e_{ij}$ is able to achieve either the optimal upper bound or optimal lower bound as per theorem \ref{theo:theo3}, then it is guaranteed to be able to achieve the other optimal bound.
\end{corollary}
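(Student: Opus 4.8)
The plan is to reduce the statement entirely to the algebraic identity established in the preceding corollary, namely $[\delta_{ij}]_{lower} = [\delta_{ij}]_{upper} - R_j$, together with the fact that $\delta_{ij}$ is a free parameter ranging over all of $\mathbb{R}$ (equation \ref{eq:taubar}). First I would fix the edge $e_{ij}$ with its latency $\tau_{ij}$ and the node $v_j$ with refractory period $R_j$, and recall from Theorem \ref{theo:theo3} that achieving the optimal upper bound at $T_o$ amounts to realizing $\bar{\tau}_{ij} \rightarrow R_j^+$ with $\phi_j = 0$, while achieving the optimal lower bound amounts to realizing $\bar{\tau}_{ij} \rightarrow 0^+$ with $\phi_j = R_j$; in both cases the only quantity $v_i$ can adjust to meet the condition is its signal-initiation time $t_i$, equivalently the offset $\delta_{ij}$ via $\bar{\tau}_{ij} = \tau_{ij} + \delta_{ij}$.

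Second, suppose $v_i$ is able to achieve the upper bound. By the preceding corollary this forces $\delta_{ij} = [\delta_{ij}]_{upper} = -(\tau_{ij} - R_j)$, a finite real number since $\tau_{ij}$ and $R_j$ are finite. Then $[\delta_{ij}]_{upper} - R_j = -\tau_{ij}$ is again a finite real number, hence an admissible choice of $\delta_{ij}$ by equation \ref{eq:taubar}, and by the same corollary it is precisely $[\delta_{ij}]_{lower}$, the offset that realizes the lower bound. The converse is identical with roles reversed: if $v_i$ achieves the lower bound with $\delta_{ij} = [\delta_{ij}]_{lower} = -\tau_{ij}$, then $[\delta_{ij}]_{lower} + R_j = [\delta_{ij}]_{upper}$ is again a finite real and therefore an admissible offset achieving the upper bound. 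Since $R_j > 0$ is finite by hypothesis, neither shift can push $\delta_{ij}$ outside its admissible domain $\mathbb{R}$.

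The substantive point — and the only place I expect a subtlety — is justifying that ``being able to achieve'' a bound imposes no constraint on $\delta_{ij}$ beyond membership in $\mathbb{R}$, so that translating by $\pm R_j$ is always legitimate. I would emphasize that $\delta_{ij}$ is not tied down by $\tau_{ij}$ nor by the observation time: values $\delta_{ij} < -\tau_{ij}$ simply correspond to signals that already arrived before $T_o$, and arbitrarily large positive $\delta_{ij}$ to delaying the initiation at $v_i$, so the admissible set of offsets is genuinely all of $\mathbb{R}$ and is invariant under the finite translation by $R_j$. Once that observation is in place, the corollary is an immediate consequence of the offset identity $[\delta_{ij}]_{lower} = [\delta_{ij}]_{upper} - R_j$: the two optimal offsets are obtained from one another by adding or subtracting the single finite constant $R_j$, so the existence of one guarantees the existence of the other.
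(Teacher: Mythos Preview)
Your argument is logically valid, but it takes a different route from the paper and, in doing so, reads the hypothesis more loosely than the paper does. You reduce everything to the offset identity $[\delta_{ij}]_{lower} = [\delta_{ij}]_{upper} - R_j$ from the preceding corollary together with $\delta_{ij}\in\mathbb{R}$; under that reading, \emph{every} signal can achieve both bounds by suitable compensation, and the corollary becomes a tautology --- a point you yourself flag in your third paragraph.

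The paper instead interprets ``achieving the optimal upper bound'' in the strict sense of Theorem~\ref{theo:theo3}: $\phi_j=0$, $\delta_{ij}=0$, hence $\tau_{ij}=R_j$. Its proof is a direct substitution. Starting from $\tau_{ij}=R_j$ and imposing the lower-bound constraints $\phi_j=R_j$, $\delta_{ij}=-\phi_j$, it computes $\bar{\tau}_{ij}=\tau_{ij}+\delta_{ij}=R_j-R_j=0$, so the lower bound is met. Conversely, starting from the lower-bound constraints and $\bar{\tau}_{ij}\to 0^+$, it solves $0=\tau_{ij}-R_j$ to recover $\tau_{ij}=R_j$, the upper-bound condition. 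The substantive content the paper extracts is that \emph{both} optimal bounds collapse to the single intrinsic condition $\tau_{ij}=R_j$ on the absolute latency, with no appeal to the freedom of $\delta_{ij}$. Your route via the compensation identity is shorter but obscures this; the paper's direct substitution makes explicit that the equivalence is a statement about $\tau_{ij}$ itself rather than about the availability of compensating offsets.
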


\begin{proof}
Asking if a signal capable of achieving the upper bound can also achieve the lower bound is equivalent to asking if $\bar{\tau}_{ij} \rightarrow 0^+$ when $\tau_{ij} = R_j$. But the condition for the lower bound is $\bar{\tau}_{ij} \rightarrow 0^+$ when $\delta_{ij} = -\phi_j$. Substituting for these explicit variables we arrive at 
\begin{align*}
\bar{\tau}_{ij} &= \tau_{ij} + \delta_{ij}\\
&= R_j - \phi_j
\end{align*}
but since $\phi_j = R_j$ for the lower bound, it implies that $\bar{\tau}_{ij}  = 0$, or more appropriately, $\bar{\tau}_{ij}  \rightarrow 0^+$. 

Asking if a signal that satisfies the optimality condition for the lower bound can also achieve the upper bound is equivalent to asking if $\bar{\tau}_{ij}  \rightarrow R_j^+$ when $\delta_{ij} \rightarrow -\phi_j$. Similarly, 
\begin{align*}
\bar{\tau}_{ij} &= \tau_{ij} + \delta_{ij}\\
0 &= \tau_{ij} - R_j\\
\tau_{ij} &= R_j
\end{align*}
which implies that the optimality condition for upper bound is satisfied. 
\end{proof}

A strict definition of an optimally efficient network then follows: In every case, as a function of the effective refractory period $\bar{R}_j$ and effective delay time $\bar{\tau}_{ij}$ along the edge $e_{ij}$, the condition for the winning vertex $v_i$ that achieves activation of $v_j$, i.e. $H_j[v_i] \leadsto v_j$, is dependent on the $\lim_{\bar{\tau}_{ij} \rightarrow \bar{R}^+_j} \forall v_i \in G(V,E)$. When this condition is satisfied for all edges $e_{ij} \in E =\{e_{ij}\}$, for all $v_i$$v_j$ node pairs by the upper and lower bound definitions for $[ \Lambda_{ij} ]_{opt}$ in theorem 3 (equation \ref{eq:opt}) such that $\Lambda_{ij} =  \bar{R}_j / \bar{\tau}_{ij} \rightarrow 1$, the network is optimally efficient, i.e. $[ \Lambda_{ij} ]_{opt} \forall v_i \in G(V,E)$. This is equivalent to requiring the condition $|\tau_{ij} - R_j| < \epsilon$ $\forall v_i \in G(V,E)$ for some arbitrarily small value of $\epsilon$.

A consequence of this is that for a network to meet this strict definition it must exhibit a lattice structure. Given constant values of $|t_{j}|$ and $R_j$ and $s_{ij}$ $\forall e_{ij} \in E$, optimized efficient signaling at the network scale is only achievable when $d_{ij} = |e_{ij}| = C$ $\forall e_{ij} \in E$, where $C$ is a constant. Geometrically, this implies that the network must exhibit a lattice structure. To see this, note that if we assume optimal signaling between any arbitrarily chosen but specific pair of vertices $v_I,v_J$ such that $\Lambda_{IJ} \rightarrow 1^+$, then
\begin{equation*}
\Lambda_{ij} = \frac{R_J - \phi_J}{\tau_{IJ} - \delta_{IJ}} = \frac{R_J - \phi_J}{\frac{d_{IJ}}{s_{IJ}} + \delta_{IJ}} \rightarrow 1^+
\end{equation*}
For the lower bound $\phi_i$ and $\delta_{ij} = 0$, so the condition for optimal signaling will be $R_j \cdot (s_{ij}/d_{ij}) \rightarrow 1^+$. This means that for any $v_iv_j$ pair such that $d_{ij} \neq d_{IJ}$ $\Lambda_{ij} \nrightarrow 1^+$.

For the upper bound $\phi_j =\delta_{ij}$, so 
\begin{equation*}
\Lambda_{ij} =  \frac{R_j - \delta_{ij}}{\frac{d_{ij}}{s_{ij}} + \delta_{ij}} \rightarrow 1^+
\end{equation*}
We know that $[ \delta_{ij} ]_{lower} = [ \delta_{ij} ]_{upper} - R_j := K$. Then
\begin{equation*}
\Lambda_{ij} =  \frac{R_j - K}{\frac{d_{ij}}{s_{ij}} + K} \rightarrow 1^+
\end{equation*}
Then evaluating again for the lower bound given the substitution yields the same result as above, such that $\Lambda_{ij} \nrightarrow 1^+$. Of course, while this provides a strict criteria of optimized network signaling by evaluating the optimality of signaling across all node pairs independently, it is not necessarily equivalent to optimized network dynamics. In other words, a pure lattice structure may not be (and in most cases will not be) the best geometric structure to optimize the dynamical flow of information across a network in support of some function or behavior or learning the network is intended to do. This condition is strictly mathematical as a consequence of theorem \ref{theo:theo3}.

\section{Concluding comments}
In this paper we presented an intuitively simple framework that describes the competing dynamics of signaling and information flows in spatial-temporal (geometric) refactory networks derived from foundational principles of biological cellular neural signaling. These results provide a systematic explanation for how the interplay between strictly local geometric and temporal process at the scale of individual interacting nodes ultimately give rise to the global behavior of the network. At the core of the model is the assumption that the response of each vertex is causally independent from whatever all the other vertices in the network are doing, even though collectively they contribute to the global dynamics. The internal (to each vertex) determination of the state $y_j$ is not dependent on any 'average' metric of the state of the network as a whole, or on statistical probability densities associated with the frequency of occurrence of events such as in Markovian processes.  This allows us to model and simulate network dynamics by keeping track of the latencies associated with signals on the edges, without needing to account for or model any number of dynamic variables necessary to describe the internal models of the individual nodes that we may have little to no information about. Or necessitating the computation of statistical variables from which network dynamics is then inferred, which generally preclude real time analysis due to the need of observing or measuring sufficient data first. This is in direct contrast to most other network analysis approaches that attempt to measure or capture global properties. \label{sec:con}

The class of networks we define and investigate here can be thought of as belonging to a broader class of networks refered to as spatial-temporal and diffusion networks. These networks are of relevance to many natural, engineering, and technological systems, because the conceptual model they provide naturally maps to many real world physical systems (\cite{baagelj2014,cencetti2018,George:2013b,Gomez2011,gong2018,Holme:2013p,shan2018}. Examples include transportation systems, communication networks, social networks, the spread of infections diseases (including computer viruses and malware), physical-chemical systems such as the interactions of particles in solution and diffusion, and -omics type biological networks of molecular and genetic interactions. However, much of the literature on these classes of networks  has focused on descriptive theory, definitions of concepts and notation, and various functional metrics. There are comparatively fewer applications, algorithms, and testable predictions. There is an almost exploratory quality to the existing literature. In particular, there is no work we are aware of that develops a theoretical foundation that takes advantage of canonical neurobiological processes to arrive at a set of practical algorithms, like we do here, with applicability to both neuroscience and machine learning. We have previously argued that theoretical neuroscience, while challenging, offers an opportunity for a deeper understanding than purely numerical or simulation based computational neuroscience models \cite{Silva2011}.

As an example of a real world application of the ideas and framework we discuss here, of particular interest to us is the signaling dynamics responsible for computation in biological neural networks across scales of organization (e.g. dendritic trees, neural circuits, or the interaction between different brain regions) constrained by the underlying structure. In other words, the structure-function relationships responsible for neural dynamics. We recently showed that Basket cell neurons in the cortex are putatively designed to optimize for the refraction ratio. They seem to be designed to optimize the ratio between the refractory period of the membrane, and action potential latencies along the axon between the initial segment (where action potentials begin) and the synaptic terminals (where they end) \cite{Puppo2018}. Dynamic signaling on branching axons is critical for rapid and efficient communication between neurons in the brain. Efficient signaling in axon arbors depends on a trade-off between the time it takes action potentials to reach synaptic terminals (temporal cost) and the amount of cellular material associated with the wiring path length of the neuron’s morphology (material cost). However, where the balance between structural and dynamical considerations for achieving signaling efficiency is, and the design principle that neurons optimize to preserve this balance has remained elusive. We took advantage of the theoretical prediction of theorem \ref{theo:theo3} discussed above and went looking to see if biological neurons displayed optimized signaling in the context predicted by the refraction ratio. One interpretation of our results is that the convoluted paths taken by axons reflect a design compensation by the neuron to slow down signaling latencies in order to optimize the refraction ratio. This was not a serendipitous discovery. We went looking to see if neurons actually optimize the refraction ratio because the mathematics pointed us in that direction.

\section*{Acknowledgments}
I am grateful to Prof. Fang Chung (Department of Mathematics, UC San Diego) for her input and feedback when this work was in its earliest stages, and to (the soon to be) Dr. Vivek George (Department of Bioengineering, UC San Diego) for the many fruitful discussions. I would also like to sincerely acknowledge the effort of the two anonymous reviewers. This was not an easy paper to review. Their very thoughtful feedback has strengthened the final work product. This work was supported by grants 63795EGII and N00014-15-1-2779 from the Army Research Office (ARO), United States Department of Defense, and in part from unrestricted funds to the Center for Engineered Natural Intelligence.

\end{document}